\title{Sound Algorithms in Imperfect Information Games}
\author{
    Michal \v Sustr\textsuperscript{\rm 1,\rm 2}~
    Martin Schmid\textsuperscript{\rm 2}~
    Matej Morav\v{c}\'{i}k\textsuperscript{\rm 2}~
    Neil Burch\textsuperscript{\rm 2}~
    Marc Lanctot\textsuperscript{\rm 2}~
    Michael Bowling\textsuperscript{\rm 2} \\
}
\newtheorem{theorem}{\protect\theoremname}
\newtheorem{definition}[theorem]{\protect\definitionname}
\newtheorem{lemma}[theorem]{\protect\lemmaname}
\newtheorem{example}[theorem]{\protect\examplename}
\newtheorem{corollary}[theorem]{\protect\corollaryname}
\newtheorem{remark}[theorem]{\protect\remarkname}
\newtheorem*{lemma*}{\protect\lemmaname}
\newtheorem*{definition*}{\protect\definitionname}
\providecommand{\corollaryname}{Corollary}
\providecommand{\claimname}{Claim}
\providecommand{\definitionname}{Definition}
\providecommand{\lemmaname}{Lemma}
\providecommand{\notationname}{Notation}
\providecommand{\remarkname}{Remark}
\providecommand{\problemname}{Problem}
\providecommand{\propositionname}{Proposition}
\providecommand{\examplename}{Example}
\providecommand{\theoremname}{Theorem}
\providecommand{\conjecturename}{Conjecture}
\providecommand{\experimentname}{Experiment}
\DeclareMathAlphabet{\mathpzc}{OT1}{pzc}{m}{it}
\DeclareMathSymbol{\shortminus}{\mathbin}{AMSa}{"39}
\newcommand{\mc}{\mathcal}
\newcommand{\mb}{\mathbb}
\newcommand{\expl}{\mathpzc{expl}\!}              
\newcommand{\gv}{u^{*}}                           
\newcommand{\epsv}{u^\epsilon}                    
\newcommand{\search}{\Omega}                            
\newcommand{\searchparam}{\boldsymbol{\theta}}
\newcommand{\match}{z}
\newcommand{\pl}{n}                               
\newcommand{\opp}{{\textnormal{-} n}}             
\newcommand{\br}{{\mathpzc{br}}}                  
\newcommand{\brv}{{\mathpzc{brv}}}                
\newcommand{\fig}[1]{Figure~\ref{fig:#1} }        
\newcommand{\defword}[1]{\textbf{\boldmath{#1}}}
\newcommand{\ffoogg}{factored-observations stochastic game\xspace}   
\newcommand{\ffooggs}{factored-observations stochastic games\xspace} 
\newcommand{\pub}{{\textnormal{pub}}}             
\newcommand{\priv}[1]{{\textnormal{priv(#1)}}}    
\newcommand{\worldTree}{\ensuremath{\mc H}}
\newcommand{\priHist}{\ensuremath{s_n}}
\newcommand{\priTree}{\ensuremath{\mc S_n}}
\newcommand{\fog}{\ensuremath{\mc G}}
\newcommand{\matches}{\ensuremath{m}}
\newcommand{\keps}{\ensuremath{(k,\epsilon)}}
\DeclareMathOperator{\EX}{\mathbb{E}}
\DeclareMathOperator*{\argmax}{arg\,max}
\newlist{todolist}{itemize}{2}
\setlist[todolist]{label=$\square$}
\newcommand*\cleartoleftpage{%
  \clearpage
  \ifodd\value{page}\hbox{}\newpage\fi
}
\definecolor{insignwin}{rgb}{0.5,0.5,0.5}
\definecolor{insignlose}{rgb}{0.5,0.5,0.5}
\definecolor{win}{rgb}{0,0,0}
\definecolor{lose}{rgb}{0,0,0}
\newlength{\nodesize}
\colorlet{chance_color}{black}
\colorlet{pl0_color}{chance_color}
\colorlet{chance_text}{white}
\colorlet{pl1_color}{magenta!50}
\colorlet{pl2_color}{green!50!lime!60}
\tikzset{
basenode/.style = {draw,
inner sep = 0.1em,
minimum size = \nodesize
},
playernode/.style={basenode,
shape = regular polygon,
regular polygon sides = 3
},
pl1/.style={playernode, fill=pl1_color},
pl2/.style={playernode, fill=pl2_color, shape border rotate=180},
chance/.style = {basenode,
fill=pl0_color, text=chance_text,
circle,
minimum size=0.7*\nodesize,
},
terminal/.style = {basenode,
draw=none,
outer sep=0,
minimum size = 0.6\nodesize
}
}
\begin{document}

\maketitle

\begin{abstract}

Search has played a fundamental role in computer game research since the very
beginning. And while online search has been commonly used in perfect information
games such as Chess and Go, online search methods for imperfect information
games have only been introduced relatively recently. This paper addresses the
question of what is a sound online algorithm in an imperfect information setting
of two-player zero-sum games. We argue that the~fixed-strategy~definitions of
exploitability and $\epsilon$-Nash equilibria are ill-suited to measure an
online algorithm's worst-case performance. We thus formalize
$\epsilon$-soundness, a concept that connects the worst-case performance of an
online algorithm to the performance of an $\epsilon$-Nash equilibrium. As
$\epsilon$-soundness can be difficult to compute in general, we introduce a
consistency framework --- a hierarchy that connects an online algorithm's
behavior to a Nash equilibrium. These multiple levels of consistency describe in
what sense an online algorithm plays ``just like a fixed Nash equilibrium''.
These notions further illustrate the difference between perfect and imperfect
information settings, as the same consistency guarantees have different
worst-case online performance in perfect and imperfect information games. The
definitions of soundness and the consistency hierarchy finally provide
appropriate tools to analyze online algorithms in repeated imperfect information
games. We thus inspect some of the previous online algorithms in a new light,
bringing new insights into their worst-case performance guarantees.

\end{abstract}

\section{Introduction}

From the very dawn of computer game research, search  was a fundamental
component of many algorithms. Turing's chess algorithm from $1950$ was able to
think two moves ahead~\cite{copeland2004essential}, and Shannon's work on chess
from $1950$ includes an extensive section on how an evaluation function can be
used within search~\cite{shannon1950xxii}. Samuel's checkers algorithm from
$1959$ already combines search and learning of a value function, approximated
through a self-play method and bootstrapping~\cite{samuel1959some}. The
combination of search and learning has been a crucial component in the
remarkable milestones where computers outperformed their human counterparts in
challenging games: DeepBlue in Chess~\cite{campbell2002deep}, AlphaGo in
Go~\cite{alphago}, DeepStack and Libratus in Poker~\cite{DeepStack, Libratus}.

Online methods for approximating Nash equilibria in sequential imperfect
information games appeared only in the last few years~\cite{OOS, brown2017safe,
DeepStack, Libratus, Pluribus, rebel}. We thus investigate what it takes for an
online algorithm to be sound in imperfect information settings. While it has
been known that search with imperfect information is more challenging than with
perfect information~\cite{frank1998search, OOS}, the problem is more complex
than previously thought. Online algorithms ``live'' in a fundamentally different
setting, and they need to be evaluated appropriately.

Previously, a common approach to evaluate online algorithms was to compute a
corresponding offline strategy by ``querying'' the online algorithm at each
state (``tabularization'' of the strategy)~\cite{OOS,MCCR}. One would then
report the exploitability of the resulting offline strategy. We show that this
is not generally possible and that naive tabularization can also lead to
incorrect conclusions about the online algorithm's worst-case performance. As a
consequence we show that some algorithms previously considered to be sound are
not.

We first give a simple example of how an online algorithm can lose to an
adversary in a repeated game setting. Previously, such an algorithm would be
considered optimal based on a naive tabularization. We build on top of this
example to introduce a framework for properly evaluating an online algorithm's
performance. Within this framework, we introduce the definition of a sound and
$\epsilon$-sound algorithm. Like the exploitability of a strategy in the offline
setting, the soundness of an algorithm is a measure of its performance against a
worst-case adversary. Importantly, this notion collapses to the previous notion
of exploitability when the algorithm follows a fixed strategy profile.

We then introduce a consistency framework, a hierarchy that formally states in
what sense an online algorithm plays ``consistently'' with an
$\epsilon$-equilibrium. The hierarchy allows stating multiple bounds on the
algorithm's soundness, based on the $\epsilon$-equilibrium and consistency type.
The stronger the consistency is in our hierarchy, the stronger are the bounds.
This further illustrates the discrepancy of search in perfect and imperfect
information settings, as these bounds sometimes differ for perfect and imperfect
information games.

The definitions of soundness and the consistency hierarchy finally provide
appropriate tools to analyze online algorithms in imperfect information games.
We thus inspect some of the previous online algorithms in a new light, bringing
new insights into their worst-case performance guarantees. Namely, we focus on
the Online Outcome Sampling (OOS)~\cite{OOS} algorithm. Consider the following
statement from the OOS publication: ``We show that OOS is consistent, i.e., it
is guaranteed to converge to an equilibrium strategy as search time increases.
To the best of our knowledge, this is not the case for any existing online game
playing algorithm\ldots' The problem is that OOS provides only the weakest of
the introduced consistencies --- local consistency. As the local consistency
gives no guarantee for imperfect information games (in contrast to perfect
information games), OOS (and potentially other locally consistent algorithms)
can be highly exploited by an adversary. The experimental section then confirms
this issue for OOS in two small imperfect information games.

\section{Background}

We present our results using the recent formalism of \ffooggs~\cite{FOG}. The
entirety of the paper trivially applies to the extensive form
formalism~\cite{osborne1994course} as well\footnote{Under the assumption the
games are perfect-recall and 1-timeable~\cite{FOG}.} (as we are only relying on
the notion of states and rewards). We believe this choice of formalism makes it
easier to incorporate our definitions in the future online algorithms, as sound
search in imperfect information critically relies on the notion of common/public
information~\cite{CFR-D,seitz2019value}. Indeed, all the recently introduced
online algorithms in imperfect information games rely on these
notions~\cite{DeepStack, Libratus, MCCR}.

\begin{definition}A \ffoogg is a tuple \[ \fog=\langle \mc N, \mc W, w^o, \mc A,
\mc T, \mc R, \mc O \rangle, \] where: \begin{itemize} \item $\mc N = \{1,\,2\}$
is a \defword{player set}. We use symbol $\pl$ for a player and $\opp$ for its
opponent. \item $\mc W$ is a set of \defword{world states} and $w^0 \in \mc W$
is a designated initial world state. \item $\mc A = \mc A_1 \times \mc A_2$ is a
space of \defword{joint actions}. The subsets $\mc A_\pl(w) \subset \mc A_\pl$
and $\mc A(w) = \mc A_1(w) \times \mc A_2(w) \subset \mc A$ specify the (joint)
actions legal at $w \in \mc W$. For $a\in \mc A$, we write $a=(a_1,a_2)$. $\mc
A_\pl(w)$ for $n \in \mc N$ are either all non-empty or all empty. A world state
with no legal actions is \defword{terminal}. \item After taking a (legal) joint
action $a$ at $w$, the \defword{transition function} $\mc T$ determines the next
world state $w'$, drawn from the probability distribution $\mc T(w,a) \in \Delta
(\mc W)$. \item $\mc R = (\mc R_1, \mc R_2)$, and $\mc R_\pl(w,a)$ is the
\defword{reward} player $\pl$ receives when a joint action $a$ is taken at $w$.
\item $\mc O = (\mc O_\priv{1},\mc O_\priv{2}, \mc O_\pub )$ is the
\defword{observation function}, where $\mc O_{(\cdot)} : \mc W \times \mc A
\times \mc W \to \mb O_{(\cdot)} $ specifies the \defword{private observation}
that player $\pl$ receives, resp. the \defword{public observation} that every
player receives, upon transitioning from world state $w$ to $w'$ via some joint
action $a$. \end{itemize} \end{definition}

A legal \defword{world history} (or trajectory) is a finite sequence $h = (w^0,
a^0, w^1, a^1, \,\dots\, , w^t)$, where $w^k \in \mc W$, $a^k \in \mc A(w^k)$,
and $w^{k+1} \in \mc W$ is in the support of $\mc T(w^k, a^k)$. We denote the
set of all legal histories by $\worldTree$, and the set of all sub-sequences of
$h$ that are legal histories as $\worldTree(h) \subseteq \worldTree$.

Since the last world state in each $h\in \worldTree$ is uniquely defined, the
notation for $\mc W$ can be overloaded to work with $\worldTree$ (e.g., $\mc
A(h) := \mc A(\textnormal{the last $w$ in $h$})$, $h$ being terminal, ...). We
use $\mc Z$ to denote the set of all terminal histories, i.e. histories where
the last world state is terminal.

The \defword{cumulative reward} of $\pl$ at $h$ is $\sum_{k=0}^{t-1} r^k_\pl :=
\sum_{k=0}^{t-1} \mc R_\pl(w^k, a^k)$. When $h$ is a terminal history,
cumulative rewards can also be called \defword{utilities}, and denoted as
$u_\pl(z)$. We assume games are \defword{zero-sum}, so $u_\pl(z) = -u_\opp(z) \;
\forall z \in \mc Z$. The maximum difference of utilities is $\Delta = \left|
\max_{z \in \mc Z} u_1(z) - \min_{z \in \mc Z} u_1(z) \right|$

Player $\pl$'s \defword{information state} or \defword{private history} at $h =
(w^0, a^0, w^1, a^1,\,\dots\,, w^t)$ is the action-observation sequence
$ \priHist(h) := (O^0_\pl, a^0_\pl, O^1_\pl, a^1_\pl,\,\dots\,,
O^t_\pl)$, where $O^k_\pl = \mc O_\pl(w^{k-1}, a^{k-1}, w^k)$ and $O^0_\pl$ is
some initial observation. The space $\priTree$ of all such sequences can be
viewed as the \defword{private tree} of $\pl$.

A \defword{strategy profile} is a pair $\sigma = (\sigma_1,\sigma_2)$, where
each (behavioral) \defword{strategy} $\sigma_\pl : \priHist \in \priTree \mapsto
\sigma_\pl(\priHist) \in \Delta(\mc A_\pl(\priHist))$ specifies the probability
distribution from which player $\pl$ draws their next action (conditional on
having information $\priHist$). We denote the set of all strategies of player
$\pl$ as $\Sigma_\pl$ and the set of all strategy profiles as $\Sigma$.

The~\defword{reach probability} of a history $h\in \worldTree$ under $\sigma$ is
defined as $
\pi^{\sigma}(h)=\pi^{\sigma}_{1}(h)\,\pi^{\sigma}_{2}(h)\,\pi^\sigma_c(h), $
where each $\pi^{\sigma}_{n}(h)$ is a~product of probabilities of the~actions
taken by player $n$ between the root and $h$, and $\pi^{\sigma}_c(h)$ is the
product of stochastic transitions. The~\defword{expected utility} for player $n$
of a strategy profile $\sigma$ is $u_n(\sigma) = \sum_{z\in \mc Z}
\pi^\sigma(z)u_n(z)$.

We define a \defword{best response} of player $n$ to the other player's
strategies $\sigma_{\opp}$ as a strategy $\br(\sigma_{\opp}) \in
\argmax_{\sigma_n' \in \Sigma_n} u_n(\sigma_n', \sigma_{\opp} )$ and
\defword{best response value} $\brv(\sigma_{\opp}) = \max_{\sigma_n' \in
\Sigma_n} u_n(\sigma_n',\,\sigma_{\opp} )$. The~profile $\sigma$ is an
\defword{$\epsilon$-Nash equilibrium} if $\left( \forall \, n \in \mc N \right)
\ : \ u_n(\sigma) \geq \max_{\sigma_n' \in \Sigma_n} u_n(\sigma_n', \sigma_\opp)
- \epsilon$, and we denote the set of all $\epsilon$-equilibrium strategies of
player $n$ as $\mc{NE}^{\epsilon}_n$. The \defword{strategy exploitability} is
$\expl_n(\sigma_n) := \big[ u_n(\sigma^*) - \min_{\sigma'_\opp \in \Sigma_\opp}
u_n(\sigma_n,\sigma'_\opp) \big]$ where $\sigma^*$ is an equilibrium strategy.
The \defword{game value} $\gv = u_1(\sigma^*)$ is the utility player 1 can
achieve under a Nash equilibrium strategy profile.

\section{Online Algorithm}\label{sec:online-algorithm}

The environment we are concerned with is that of a repeated game, consisting of
a sequence of individual matches. As a match progresses, the algorithm produces
a strategy for a visited information state on-line: that is, once it actually
observes the state. This common framework of repeated games is particularly
suitable for analysis of online algorithms, as the online algorithm can be
conditioned on the past experience (e.g. by trying to adapt to the opponent or
by re-using parts of the previous computation). We are then interested in the
accumulated reward of the agent during the span of the repeated game. Of
particular interest will be the expected reward against a worst-case adversary.

\subsection{Coordinated Matching Pennies}

We now introduce a small imperfect information game that will be used throughout
the article -- ``Coordinated Matching Pennies''~(CMP). It is a variation on the
well-known Matching Pennies game~\cite{osborne1994course}, where players choose
either Heads or Tails and receive a utility of $\pm 1$ if their actions
(mis)match. For CMP, we additionally introduce a publicly observable chance
event just after the first player acts. See \fig{matching-pennies}for details.

\begin{figure}[t]
  \centering
  \includegraphics[width=0.6\linewidth]{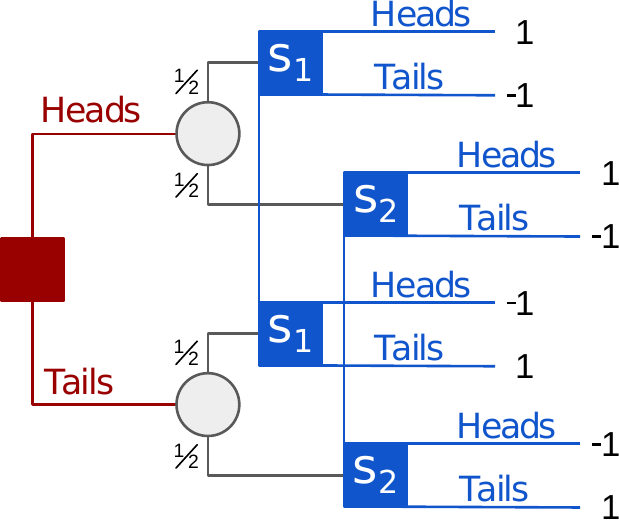}

  \caption{Coordinated Matching Pennies. After the first player acts, chance
  randomly chooses whether the second player will be playing in the information
  state $s_1$ or $s_2$. The first player receives utility of $1$ if players'
  actions match and $-1$ if they mismatch. }

  \label{fig:matching-pennies}

\end{figure}

Let $p$ and $q$ denote the probability of playing Heads in information states
$s_1$ and $s_2$ respectively. An equilibrium strategy for the second player
(Blue) is then any strategy where the average of $p$ and $q$ is $\frac{1}{2}$.
He thus has to coordinate the actions between his two information states, while
the first player has a unique uniform equilibrium strategy. Similar equilibrium
coordination happens also in Kuhn Poker~\cite{kuhn1950simplified}.

\subsection{Naive Tabularization} We now show that if one naively tries to
convert an online algorithm into a fixed strategy, the resulting exploitability
is not always representative of the worst-case performance of the online
algorithm. Consider the following algorithm \texttt{PlayCache} for the repeated
game of CMP. \texttt{PlayCache} keeps an internal state, a cache -- a mapping of
information state to probability distribution over the actions, and it gradually
fills the cache during the game.

Concretely, \texttt{PlayCache} plays for the second player as follows:

\begin{itemize}
\item Initialize algorithm's state $\searchparam_0$ to an empty cache.

\item Given an information state $s$ visited during a game, there are three
possible cases: i) The cache is empty: play Heads and store $\{s,
\text{Heads}\}$ into the cache. ii) The cache is non-empty and contains $s$:
play the cached strategy for $s$. iii) The cache is non-empty and does not
contain $s$: play Tails and store $\{s, \text{Tails}\}$.

\end{itemize}

Consider what happens if one tries to naively tabularize the \texttt{PlayCache}
by querying the algorithm for all the information states. If we query the
algorithm for states $s_1,\ s_2$, we get the resulting offline strategy $s_1 :
\text{Heads},\ s_2 : \text{Tails}$. Querying the algorithm for states in reverse
order, i.e. $s_2,\ s_1$ results in $s_1 : \text{Tails},\ s_2 : \text{Heads}$.
And while both of these offline strategies have zero exploitability, one can
still exploit the algorithm during the repeated game. This follows from the fact
that the very first time the \texttt{PlayCache} gets to act, it always plays
$\text{Heads}$. The first player can thus simply play $\text{Heads}$ during the
first match and is guaranteed to win the match. As we will show later,
\texttt{PlayCache} falls within a class of algorithms that can be exploited, but
where the average reward is guaranteed to converge to the game value as we
repeatedly keep playing the game.

Where did this discrepancy between the exploitability of the tabularized
strategy and the exploitability of the online algorithm come from? It is simply
because the tabularized strategy does not properly describe the game dynamics of
\texttt{PlayCache}. In fact, there is no fixed strategy that does so! We will
now formalize an appropriate framework to describe the rewards and dynamics of
online algorithms, which will allow us to define notions for the expected reward
and the worst-case performance in the online setting.

\subsection{Online Settings}

The \defword{repeated game} $p$ consists of a finite sequence of $k$ individual
matches $\matches = (\match_1,\,\match_2,\, \hdots,\, \match_k)$, where each
\defword{match} $\match_i \in \mc Z$ is a sequence of world states and actions
$\match_i = (w_i^0,\,a_i^0,\,w_i^1,\,a_i^1\,\hdots,\,a_i^{l_i-1},  w_i^{l_i})$,
ending in a terminal world state $w_i^{l_i}$. For each visited world state in
the match, there is a corresponding information state, i.e. a player's private
perspective of the game (for perfect information games, the notion of
information state and world state collapse as the player gets to observe the
world perfectly). An online algorithm $\search$ then simply maps an information
state observed during a match to a strategy, while possibly using its internal
algorithm state (Def.~\ref{defn:online_algorithm}).

Given two players that use algorithms $\search_1, \search_2$, we use
$P^k_{\search_1, \search_2}$ to denote the distribution over all the possible
repeated games $\matches$ of length $k$ when these two players face each other.
The average reward of $\matches$ is $\mc R_\pl(\matches) = 1/k \sum_{i=1}^k
u_\pl(z_i)$ and we denote $\EX_{\matches \sim P^k_{\search_1, \search_2}}[\mc
R_\pl(\matches)]$ to be the expected average reward when the players play $k$
matches. From now on, if player $n$ is not specified, we assume without loss of
generality it is player $1$. The proofs of the theorems can be found in the
Appendix.

\begin{definition} \label{defn:online_algorithm} Online algorithm $\search$ is a
function $\priTree \times \Theta \mapsto \Delta(\mc A_\pl(\priHist)) \times
\Theta$, that maps an information state $\priHist \in \priTree$ to the strategy
$\sigma_\pl(\priHist) \in \Delta(\mc A_\pl(\priHist))$, while possibly making
use of algorithm's state $\searchparam \in \Theta$ and updating it. We denote
the algorithm's \defword{initial state} as $\searchparam_0$. A~special case of
an online algorithm is a \defword{stateless algorithm}, where the output of the
function is independent of the algorithm's state (thus independent of the
previous matches). If the output depends on the algorithm's state, we say the
algorithm is \defword{stateful}. \end{definition}

As the game progresses, the online algorithm produces strategies for the visited
information states and updates its algorithm state. This allows it to
potentially output different strategies for the same information state visited
in different matches. We thus use $\search^\matches(s_\pl)$ to denote the
resulting strategy in the information state $s_\pl$ after the algorithm has
already played the matches $\matches={z_1,\, \hdots,\, z_k}$. Note that players
can not visit the same information state twice in a single match.

\begin{remark} \label{rem:stochastic} If we need to encode a stochastic
algorithm, we can do it formally as taking the initial state to be a realization
of a random variable. The initial state should be extended to encode how the
algorithm should act (seemingly) randomly in any possible game-play situation
beforehand. \end{remark}

\subsection{Soundness of Online Algorithm}

We are now ready to formalize the desirable properties of an online algorithm in
our settings. Exploitability, resp. $\epsilon$-equilibrium considers the
expected utility of a fixed strategy against a worst-case adversary in a single
match. We thus define a similar concept for the settings of an online algorithm
in a repeated game: $\keps$-soundness. Intuitively, an online algorithm is
$\keps$-sound if and only if it is guaranteed the same reward as if it followed
a fixed $\epsilon$-equilibrium after $k$ matches.

\begin{definition}
\label{def:epsilon_sound}
For an $\keps$-sound online algorithm $\search$, the expected average reward against any opponent is at least as good as if it followed an 
$\epsilon$-Nash equilibrium fixed strategy $\sigma$ for any number of matches $k'$:
\begin{align}
\label{eq:epsilon_sound}
    \forall k' \geq k \ \forall \search_2 \, : \, \EX_{\matches \sim P^{k'}_{\search, \search_2}}\left[\mc R(\matches)\right] \geq \EX_{\matches \sim P^{k'}_{\sigma, \search_2}}\left[\mc R(\matches)\right].
\end{align}
If algorithm $\search$ is $\keps$-sound for $\forall k \geq 1$, we say the algorithm is $\epsilon$-sound.
\end{definition}

Note that this definition guarantees that an online algorithm that simply
follows a fixed $\epsilon$-equilibrium is $\epsilon$-sound. And while the online
algorithm can certainly play as a fixed strategy, online algorithms are far from
limited to doing so, e.g. \texttt{PlayCache}
from~Section~\ref{sec:online-algorithm}. \texttt{PlayCache} is $1$-sound
($\epsilon=1$) as this algorithm is highly exploitable in the first match.
Additionally, an online algorithm may be sound ($\epsilon=0$), but there might
not be any offline equilibrium that produces the same distribution of matches.

\subsection{Response Game}

To compute the expected reward $\EX_{\matches \sim P^{k'}_{\search,
\search_2}}\left[\mc R(\matches)\right]$ as in Def. (\ref{def:epsilon_sound}),
we construct a repeated game~\cite{osborne1994course} in the FOSG formalism,
where we replace the decisions of the online algorithm with stochastic (chance)
transitions. As we allow the online algorithm to be stateful and thus produce
strategies depending on the game trajectory, the response game must also reflect
this possibility. The resulting game $\fog^k_\search$ is thus exponential in
size as it reflects all possible trajectories of $k$ matches. We call this
single-player game a \defword{$k$-step response game}.

The $k$-step response game allows us to compute the best response value of a
worst-case adversary in $k$-match game-play. We will use overloaded notation
$\brv(\fog^k_\search)$ to denote this value.

\begin{restatable}{theorem}{responsegamesound} If $\forall k'\geq k \quad
\brv(\fog^{k'}_\search) \leq k'\epsilon$, then algorithm $\search$ is
$\keps$-sound. \end{restatable}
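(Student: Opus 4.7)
My approach is to reduce the $k'$-match condition of $\keps$-soundness from Definition~\ref{def:epsilon_sound} to a single minimax statement inside the exponentially-large response game $\fog^{k'}_\search$, where $\search$'s randomness has been absorbed into chance.

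\emph{Step 1 --- Correspondence.} First I would establish that every opponent online algorithm $\search_2$ (with stochastic algorithms handled via the initial-state randomization of Remark~\ref{rem:stochastic}) induces a behavioral strategy $\tau_{\search_2}$ of player $2$ in $\fog^{k'}_\search$, and conversely every such $\tau_2$ is realized by some online algorithm. Because $\fog^{k'}_\search$ is constructed precisely to mirror the distribution $P^{k'}_{\search, \search_2}$, both sides of the identity
\[
k' \cdot \EX_{\matches \sim P^{k'}_{\search, \search_2}}\left[\mc R(\matches)\right] \;=\; u_1(\tau_{\search_2}) \qquad (\textnormal{inside } \fog^{k'}_\search)
\]
compute the same quantity: player $1$'s expected total utility across the $k'$ matches.

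\emph{Step 2 --- Applying the hypothesis.} Next I would use zero-sum duality inside $\fog^{k'}_\search$: by definition $\brv(\fog^{k'}_\search)$ is the maximum utility an adversary can extract, so $-\brv(\fog^{k'}_\search)$ lower-bounds player~$1$'s utility. Combined with Step~1, the hypothesis $\brv(\fog^{k'}_\search) \leq k'\epsilon$ yields, for every $\search_2$,
\[
\EX_{\matches \sim P^{k'}_{\search, \search_2}}\left[\mc R(\matches)\right] \;\geq\; \gv - \epsilon,
\]
using the convention (consistent with the single-match definition of $\brv$) that $\brv$ measures exploitation relative to the game value $\gv$.

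\emph{Step 3 --- Matching an $\epsilon$-NE.} Finally, by the minimax characterization of $\epsilon$-Nash equilibria, there exists a fixed $\epsilon$-NE strategy $\sigma$ whose worst-case per-match value is exactly $\gv - \epsilon$ --- typically an equalizer-style $\epsilon$-NE that achieves the same per-match value against every opponent in the underlying game. Since $\sigma$ is played identically in each match, its expected average reward against any $\search_2$ equals the average of its per-match values, which Step~2 dominates. This gives the inequality of Definition~\ref{def:epsilon_sound} for every $k' \geq k$ and every $\search_2$, establishing $\keps$-soundness.

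\emph{Main obstacle.} The most delicate step is Step~1: rigorously verifying that the information partition of player~$2$ in $\fog^{k'}_\search$ reflects exactly the information available to an online algorithm after $i$ past matches, and that randomized algorithms (allowed implicitly by Remark~\ref{rem:stochastic}) correspond cleanly to behavioral strategies in the response game. Once that correspondence is in place, Steps~2 and~3 are standard minimax manipulations, so the technical weight of the proof sits in the construction of $\fog^{k'}_\search$ and its bijective link to opponent algorithms.
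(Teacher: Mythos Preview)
Your approach is essentially the paper's: both arguments reduce the soundness condition to a comparison of best-response values in the $k'$-step response game, using that a fixed $\epsilon$-equilibrium $\sigma$ has $\brv(\fog^{k'}_\sigma)\le k'\epsilon$. The paper's proof is a three-sentence sketch that states this inequality and reads the conclusion off directly; your Steps~1--2 make the correspondence between opponent online algorithms and behavioral strategies in $\fog^{k'}_\search$ explicit, which the paper omits entirely. That added care is useful and is exactly the ``main obstacle'' you identify.

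One point to tighten is Step~3. You appeal to an ``equalizer-style $\epsilon$-NE that achieves the same per-match value against every opponent.'' Such equalizer strategies need not exist in a general two-player zero-sum game, so this as stated is a gap. The paper's proof, being so terse, sidesteps the issue by comparing only worst-case values through $\brv$: from $\brv(\fog^{k'}_\search)\le k'\epsilon$ one gets $\min_{\search_2}\EX_{\matches\sim P^{k'}_{\search,\search_2}}[\mc R(\matches)]\ge \gv-\epsilon$, and from $\brv(\fog^{k'}_\sigma)\le k'\epsilon$ for any $\epsilon$-NE $\sigma$ one gets the matching bound for $\sigma$. If Definition~\ref{def:epsilon_sound} is read literally as an opponent-by-opponent inequality $\EX_\search\ge\EX_\sigma$ for the \emph{same} $\search_2$, neither your argument nor the paper's actually delivers that (an $\epsilon$-NE can earn far more than $\gv-\epsilon$ against a weak $\search_2$). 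The intended reading, consistent with the paper's proof, is the worst-case one; with that reading you can drop the equalizer claim and simply conclude from the bound $\gv-\epsilon$ on both sides.
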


\begin{proof} If we used a fixed $\epsilon$-equilibrium strategy $\sigma$ in
each match (repetition) of a response game $\fog_\sigma^{k'}$, then the
$\brv(\fog_\sigma^{k'}) = k'\epsilon$ because adversary can gain at most
$\epsilon$ in each match. Since $\epsilon$-sound algorithm should play at least
as well as some offline $\epsilon$-equilibrium, it must have
$\brv(\fog_\Omega^{k}) \leq {k}\epsilon \ \forall k \geq 1$. For a $\keps$-sound
algorithm we add the condition of $\forall k' \geq k$. \end{proof}

\subsection{Tabularized Strategy}

When an online algorithm produces the same strategy for an information state
regardless of the previous matches, there is no need for the $k$-response game.
Fixed strategy notion sufficiently describes the behavior of the online
algorithm and thus the exploitability of the fixed strategy matches the
soundness. To compute this fixed strategy, one simply queries the online
algorithm for all the information states in the game.

\section[Relating soundness and Nash]{Relating $\keps$-Soundness and $\epsilon$-Nash}
\label{sec:consistency}

Unfortunately, our notion of $\keps$-soundness is often infeasible to reason
about, as it requires checking that the algorithm does not make strategy errors
for $\forall k'\geq k$. In this section, we introduce the concept of consistency
that allows one to formally state that the online algorithm plays
``consistently'' with an $\epsilon$-equilibrium. Our consistency notion allows
us to directly bound the $\keps$-soundness of an online algorithm. We introduce
three hierarchical levels of consistency, with varying restrictions and
corresponding bounds. Notice that they differ mainly in the
order~of~quantifiers.


\definecolor{consA}{rgb}{0.0, 0.0, 0.0}
\definecolor{consB}{rgb}{0.0, 0.0, 0.0}
\definecolor{consC}{rgb}{0.0, 0.0, 0.0}

\subsection{Local Consistency}\label{sec:local-consistency} Local consistency
simply guarantees that every time we query the online algorithm, there is an
$\epsilon$-equilibrium that has the same local behavioral strategy $\sigma(s)$
for the queried state~$s$.

\begin{definition}
\label{def:local_consistency}
Algorithm $\search$ is locally consistent with $\epsilon$-equilibria if
\begin{align*}
  &{\color{consA}\forall k\ \forall \matches=(z_1, \, z_2, \, \hdots,\, z_k) }
  \ \
  {\color{consB}\forall h \in \worldTree(z_k)}
  \ \
  {\color{consC}\exists \sigma \in \mc{NE}^{\epsilon}_n} \\
  &\text{holds that}\ \ \search^{(z_1,\,\hdots,\, z_{k-1})}(s(h)) = \sigma(s(h)).
\end{align*}
\end{definition}

While this suggests that the algorithm plays like some equilibrium, it is not
so, and the resulting strategy can be highly exploitable. This is because one
cannot combine local behavioral strategies from different $\epsilon$-equilibria
and hope to preserve their exploitability. In another perspective, as soon as
one starts to condition the selection of the strategy on private information, it
risks computing strategies that can be exploited in a repeated game. This is a
motivation behind introducing $\keps$-soundness, as it allows us to analyze
algorithms that use such conditioning.

Consider the CMP game with two strategies $\sigma^1=\{(s_1,p=1),(s_2,q=0)\}$ and
$\sigma^2=\{(s_1,p=0.5), (s_2,q=0.5)\}$. While both strategies are equilibria,
if one plays in the states $s_1$ and $s_2$ based on the first and second
equilibrium respectively, it corresponds to an exploitable strategy
$\{(s_1,p=1),(s_2, q=0.5)\}$.

\begin{restatable}{theorem}{locallyconsistentnotsound}
\label{thm:locallyconsistentnotsound}
An algorithm that is locally consistent with $\epsilon$-equilibria might not be $\keps$-sound.
\end{restatable}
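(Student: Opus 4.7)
The plan is to prove the theorem by exhibiting a concrete counterexample: the \texttt{PlayCache} algorithm from Section~\ref{sec:online-algorithm} playing as Blue in Coordinated Matching Pennies (CMP). In fact I will establish the stronger statement that \texttt{PlayCache} is locally consistent with the set of exact Nash equilibria (i.e., with $\epsilon=0$) yet fails to be $(k,0)$-sound for every $k \geq 1$.

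First, I would verify local consistency. Blue's Nash equilibria in CMP are exactly the behavioral profiles $\{(s_1,p),(s_2,q)\}$ with $p+q=1$; in particular this set contains the two pure equilibria $\sigma^A = \{(s_1,H),(s_2,T)\}$ and $\sigma^B = \{(s_1,T),(s_2,H)\}$. At any query encountered during play, \texttt{PlayCache} outputs a pure action at a single visited information state; by case-checking the four possible (state, cached-action) situations, one sees that at least one of $\sigma^A,\sigma^B$ agrees with \texttt{PlayCache} at the queried state. This establishes Definition~\ref{def:local_consistency} with $\epsilon=0$.

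Second, I would exhibit an exploiting adversary $\search_2$ that plays Heads in match~1 (its behavior in later matches is immaterial). Because \texttt{PlayCache}'s cache is empty at the start, it deterministically plays Heads at whichever of $s_1, s_2$ chance selects; the actions match and Blue's utility is $-1$, giving $\EX_{z_1 \sim P^1_{\search,\search_2}}[\mc R(z_1)] = -1$. In contrast, any Nash $\sigma \in \mc{NE}^0$ has exploitability $0$ and the game value from Blue's perspective is $0$, so $\EX_{z_1 \sim P^1_{\sigma,\search_2}}[\mc R(z_1)] \geq 0$. Therefore condition~(\ref{eq:epsilon_sound}) fails at $k'=1$, so \texttt{PlayCache} is not $(1,0)$-sound. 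For any $k \geq 1$ the same adversary works: after match~1, \texttt{PlayCache}'s cache encodes a pure Nash profile, so the expected reward of matches $2,\dots,k'$ is at most $0$ against any adversary; the expected average reward is thus at most $-1/k' < 0$, strictly below the $\geq 0$ guarantee of any Nash $\sigma$, ruling out $(k,0)$-soundness.

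The main subtlety is the quantifier structure of Definition~\ref{def:local_consistency}: local consistency only requires an equilibrium agreeing with the algorithm at the \emph{single} queried state, rather than a single equilibrium agreeing with the algorithm's implicit strategy across all states simultaneously. The counterexample exploits this gap precisely -- every individual query is locally consistent with some Nash equilibrium, but the algorithm's implicit match-1 strategy ``Heads at every state'' is \emph{not} any equilibrium, and a one-match adversary with foresight of this behavior can punish it. No heavy calculations are required beyond enumerating the equilibrium set of CMP and tracing the first match.
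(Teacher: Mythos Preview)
Your proof is correct, but it uses a different counterexample than the paper. The paper's (implicit) proof is the example given just before the theorem statement: a stateless algorithm in CMP that always plays $p=1$ at $s_1$ and $q=0.5$ at $s_2$. Each of these local strategies matches some exact equilibrium ($\sigma^1=\{p=1,q=0\}$ and $\sigma^2=\{p=q=0.5\}$ respectively), so the algorithm is locally consistent; yet the combined strategy $\{p=1,q=0.5\}$ is not an equilibrium and is exploitable in every match. The appendix also supplies a perfect-information single-player example built on the same idea of mixing two equilibria at different nodes.

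You instead reuse \texttt{PlayCache}. This is valid—\texttt{PlayCache} is globally consistent, hence in particular locally consistent, and you correctly argue it is not $(k,0)$-sound for any $k$. However, the paper reserves \texttt{PlayCache} precisely for the \emph{next} theorem (Theorem~\ref{thm:globallyconsistentnotsound}), where global consistency is shown to be insufficient. The paper's counterexample for the present theorem is deliberately one that is locally but \emph{not} globally consistent, which isolates the failure mode of Definition~\ref{def:local_consistency}: the equilibrium witnessing consistency may differ from state to state. Your choice obscures that separation, since \texttt{PlayCache}'s exploitability stems from the adversary anticipating its match-1 behavior rather than from stitching together incompatible equilibria across states. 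A minor wording issue: the sentence ``the expected reward of matches $2,\dots,k'$ is at most $0$ against any adversary'' is loose; what actually holds is that after match~1 \texttt{PlayCache} plays a pure Blue equilibrium, and for such a strategy in CMP the expected per-match reward is \emph{exactly} $0$ regardless of the opponent, which is what makes the average $-1/k'$.
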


Note that this can happen even in perfect information games. 
in~Appendix~\ref{ap:examples}). Interestingly, local consistency is sufficient
if the algorithm is consistent with a subgame perfect equilibrium.

\begin{restatable}{theorem}{localsubgameperfect} In perfect information games,
an algorithm that is locally consistent with a subgame perfect equilibrium is
sound. \end{restatable}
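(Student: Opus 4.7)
The plan is to exploit the backward-induction structure of perfect information zero-sum games. In such a game every history $h$ defines a subgame whose value $v(h)$ is unique (computed by backward induction), and the defining property of a subgame perfect equilibrium (SPE) $\sigma$ is that at each $h$ where a player acts, $\sigma$'s support at $h$ consists only of actions $a$ attaining $v(h)$. Because $s(h)=h$ in perfect information, local consistency with SPE says that at every world state visited during any match, and regardless of which SPE witnesses that particular query, $\search$ places positive probability only on actions $a$ with $v(ha)=v(h)$, where $ha$ denotes the history obtained by extending $h$ with action $a$.

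The key step is a single-match lemma: for every history $h$ and every adversary strategy, the expected cumulative reward of our player when using $\search$ from $h$ onward is at least $v(h)$. I would prove this by backward induction on the depth of $h$. The terminal case is trivial. At an internal history where our player acts, local SPE-consistency forces $\search$ to be supported on actions with $v(ha)=v(h)$, so the induction hypothesis gives expected value at least $v(ha)=v(h)$ from every reached child. At a history where the opponent acts, $v(h)=\min_b v(hb)$, so any action $b$ the opponent plays satisfies $v(hb)\geq v(h)$ and the inductive bound carries through. Chance nodes are handled by linearity of expectation together with $v(h)=\EX_a v(ha)$. Applying the lemma at $h=w^0$ yields expected per-match reward at least $v(w^0)=\gv$ against every adversary. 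Since this bound holds conditional on every past-match trajectory $(z_1,\ldots,z_{i-1})$, the expected average reward over any $k'$-match repeated game is at least $\gv$, so the adversary gains at most $0$ above equilibrium value in each of the $k'$ repetitions of the response game $\fog^{k'}_{\search}$. The response-game sufficient condition stated earlier then gives $(1,0)$-soundness of $\search$, which is soundness.

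The main obstacle is that local consistency permits a different SPE to witness each query---across matches and even across nodes within the same match---so the aggregate behavior of $\search$ need not coincide with any single fixed SPE. The reason the argument still goes through is exactly the uniqueness of the subgame value in perfect information zero-sum games: every SPE action at $h$ yields the same value $v(h)$, so value-preservation is closed under this pointwise mixing of SPEs. This is precisely the property that fails for general (not subgame perfect) Nash equilibria, yielding Theorem~\ref{thm:locallyconsistentnotsound}, and that fails again in imperfect information games, where there is no well-defined value for a player's ``subgame'' based on private information alone---explaining why local consistency is otherwise insufficient for soundness.
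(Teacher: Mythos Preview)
Your proposal is correct and follows essentially the same approach as the paper's proof: both argue that in perfect information the subgame value $v(h)$ is unique across all subgame perfect equilibria, so any SPE-supported action preserves this value, and then conclude optimality of each match via backward induction from terminals. The paper only sketches the induction (``A formal proof can be constructed by induction on the maximal distance from a terminal history''), whereas you spell out the three cases (own move, opponent move, chance) and the passage to the repeated game explicitly, but the underlying argument is the same.
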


A particularly interesting example of an algorithm that is only locally
consistent is Online Outcome Sampling~\cite{OOS} (OOS). See
Section~\ref{sec:experiments} for detailed discussion and experimental
evaluation, where we show that this algorithm can produce highly exploitable
strategies in imperfect information games.

\subsection{Global Consistency}

Local consistency guarantees consistency only for individual states. The problem
we have then seen is that the combination of these local strategies might
produce highly exploitable overall strategy. A natural extension is then to
guarantee consistency with some equilibria for all the states in combination: a
global consistency.

\begin{definition}
\label{def:global_consistency}
Algorithm $\search$ is globally consistent with $\epsilon$-equilibria~if
\begin{align*}
  &{\color{consA} \forall k\ \forall \matches=(z_1,\, z_2,\, \hdots,\, z_k) }
  \ \
  {\color{consC} \exists \sigma \in \mc{NE}^{\epsilon}_n }
  \ \
  {\color{consB} \forall h \in \worldTree(z_i) }\\
  &\text{holds that}\ \ \search^{(z_1,\,\hdots,\, z_{i-1})}(s(h)) = \sigma(s(h)) \ \text{ for } \ \forall i \in \{1,\,\dots,\,k\}.
\end{align*}

\end{definition}

\noindent However:

\begin{restatable}{theorem}{globallyconsistentnotsound}
\label{thm:globallyconsistentnotsound}
An algorithm that is globally consistent with $\epsilon$-equilibria might not be $\epsilon$-sound.
\end{restatable}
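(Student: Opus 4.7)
The plan is to produce a concrete counterexample in the Coordinated Matching Pennies game of Section~\ref{sec:online-algorithm}, exploiting the fact that global consistency allows the rationalizing equilibrium to be chosen \emph{after} the full match sequence is fixed. This lets an algorithm branch its long-run behavior on the public chance outcome of match~$1$ while still playing an exploitable action in match~$1$ itself.

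I would define an online algorithm $\search$ for Blue as follows: in match~$1$, play Heads at whichever information state chance selects; from match~$2$ onward, always follow the pure equilibrium $\sigma^1=(p=1,q=0)$ if chance visited $s_1$ in match~$1$, and $\sigma^2=(p=0,q=1)$ otherwise. Both $\sigma^1$ and $\sigma^2$ are Nash equilibria since $p+q=1$. To verify global consistency with $0$-equilibria, fix any match sequence $\matches=(z_1,\dots,z_k)$; exactly one of $s_1,s_2$ lies along the path of $z_1$, so pick $\sigma\in\{\sigma^1,\sigma^2\}$ to be the corresponding equilibrium. By construction, the match-$1$ decision plays Heads at the info state that $\sigma$ also prescribes Heads for, and every subsequent query explicitly follows $\sigma$, so the $\exists\sigma\,\forall h$ clause of Definition~\ref{def:global_consistency} is satisfied for this $\matches$.

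Next, I would exhibit a Red adversary $\search_2$ that plays Heads in match~$1$ and uniformly randomly thereafter, and show $\search$ cannot be $0$-sound against it. Since $\search$ plays Heads in match~$1$, the players' actions match and Blue earns utility $-1$. In matches $2,\dots,k'$, Blue follows a Nash equilibrium against a fixed strategy; a short direct computation in CMP shows that every equilibrium $(p,q)$ with $p+q=1$ against any Red mixture $(r_H,r_T)$ gives Blue expected per-match reward $(r_H-r_T)(1-p-q)=0$, so Blue earns $0$ in each later match. Hence $\EX_{\matches\sim P^{k'}_{\search,\search_2}}[\mc R(\matches)]=-1/k'$. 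For any \emph{fixed} $0$-equilibrium $\sigma$, the same identity gives $\EX_{\matches\sim P^{k'}_{\sigma,\search_2}}[\mc R(\matches)]=0$, so inequality~(\ref{eq:epsilon_sound}) fails against $\search_2$ for every choice of $\sigma\in\mc{NE}^0_n$, showing the algorithm is not $0$-sound.

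The main subtlety I expect is the quantifier analysis: the proof critically uses the order $\forall\matches\,\exists\sigma\,\forall h$ in Definition~\ref{def:global_consistency}, so that the rationalizing equilibrium may depend on the realized match sequence and in particular on the public chance node in match~$1$. If one instead read the quantifiers as ``$\exists\sigma$ first'', the construction would collapse to playing a fixed equilibrium and the separation from soundness would disappear --- which is exactly the distinction between global consistency and the stronger notions that the theorem is designed to expose.
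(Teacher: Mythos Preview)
Your proposal is correct and takes essentially the same approach as the paper: the algorithm you construct is precisely the \texttt{PlayCache} algorithm of Section~\ref{sec:online-algorithm} (branch on which of $s_1,s_2$ occurs in match~$1$, play Heads there, and thereafter commit to the matching pure equilibrium $\sigma^1$ or $\sigma^2$), and the paper's one-line proof simply cites \texttt{PlayCache} as globally consistent yet exploitable in the first match. Your explicit verification of global consistency and the payoff computation against the Heads-then-uniform adversary are sound and merely fill in details the paper leaves implicit.
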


\begin{proof}
    A counter-example:
    The \texttt{PlayCache} algorithm is globally consistent, but it is not sound ($\epsilon = 0$), as we have seen that it is exploitable during the first match ($k=1$).
\end{proof}

But what if the algorithm keeps on playing the repeated game? While the global
consistency with equilibria does not guarantee soundness, it guarantees that the
expected average reward converges to the game value in the limit.

\begin{restatable}{theorem}{globalconsistencyiiglimit}
\label{thm:global_consistency_iig_limit}
For an algorithm $\Omega$ that is globally consistent with $\epsilon$-equilibria,
\begin{align}
    \forall k \ \forall \search_2 \, : \, \EX_{\matches \sim P^k_{\search, \search_2}}[\mathcal{R}(\matches)] \geq \gv - \epsilon - \frac{ \bigl| \mc S_1 \bigr| \Delta}{k}.
\end{align}
\end{restatable}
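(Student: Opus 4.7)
The plan is to track, over the course of $k$ matches, how much the algorithm can drift from any single $\epsilon$-Nash equilibrium. The key structural observation is that once an info state $s$ has been visited in some match, the algorithm's output $\search^{(\cdot)}(s)$ at that $s$ is already fixed, and every witness equilibrium supplied by global consistency must extend this commitment. Player~$1$ has only $|\mc S_1|$ information states, so at most $|\mc S_1|$ of the $k$ matches can introduce a previously unseen info state. In every other match the algorithm's play coincides exactly with the player-$1$ component of some $\epsilon$-equilibrium, so against an arbitrary opponent it loses at most $\epsilon$ in expectation; each of the at most $|\mc S_1|$ ``novel'' matches can cost an additional $\Delta$, which after averaging over $k$ matches yields the $|\mc S_1|\Delta/k$ term.

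Concretely, fix any opponent $\search_2$ and argue match by match. For the realized prefix $(z_1,\dots,z_{i-1})$, let $V_{i-1}\subseteq\mc S_1$ denote the set of player-$1$ info states appearing in it, and apply Definition~\ref{def:global_consistency} to obtain an $\epsilon$-Nash equilibrium $\sigma^{(i-1)}$ whose player-$1$ component agrees with the algorithm's committed output on every $s\in V_{i-1}$. Let $\sigma^{i}_{\search}$ denote the strategy the algorithm actually uses in match $i$, namely $\sigma^{i}_{\search}(s):=\search^{(z_1,\dots,z_{i-1})}(s)$. By construction $\sigma^{i}_{\search}$ and $\sigma^{(i-1)}_1$ coincide on $V_{i-1}$ and may differ only on info states outside $V_{i-1}$.

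The heart of the argument is a coupling-style comparison of reach probabilities. Writing $p^{i}_{\mathrm{new}}$ for the probability (against the opponent's conditional strategy $\sigma^{i}_{2}$ in match $i$) that the match reaches some info state outside $V_{i-1}$, the two strategies $\sigma^{i}_{\search}$ and $\sigma^{(i-1)}_1$ induce identical reach probabilities on every terminal that stays within $V_{i-1}$, and therefore the same total ``escape mass'' $p^{i}_{\mathrm{new}}$. The elementary fact that two sub-probability measures of equal mass $p$ differ in expected reward by at most $\Delta\cdot p$ then yields
\begin{equation*}
u_1\bigl(\sigma^{i}_{\search},\sigma^{i}_{2}\bigr)\ \ge\ u_1\bigl(\sigma^{(i-1)}_{1},\sigma^{i}_{2}\bigr)\ -\ \Delta\cdot p^{i}_{\mathrm{new}}.
\end{equation*}
Combined with the standard fact that the player-$1$ component of an $\epsilon$-Nash equilibrium has worst-case utility at least $\gv-\epsilon$ (under the paper's convention), this is at least $\gv-\epsilon-\Delta\cdot p^{i}_{\mathrm{new}}$.

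Summing over $i=1,\dots,k$, taking the outer expectation, and using the almost-sure bound $\sum_{i}\mathbf{1}[\text{match $i$ introduces a new info state}]\le |\mc S_1|$ to get $\sum_{i}\EX[p^{i}_{\mathrm{new}}]\le |\mc S_1|$, then dividing by $k$, produces the stated inequality. The main obstacle I foresee is formalizing the coupling step rigorously under a stateful opponent: one must verify, conditional on each history prefix, that $\sigma^{i}_{\search}$ and $\sigma^{(i-1)}_{1}$ genuinely induce the same measure on all play prefixes that remain inside $V_{i-1}$, so that the $\Delta\cdot p^{i}_{\mathrm{new}}$ bound applies uniformly across realizations. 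A secondary subtlety is the exact constant in front of $\epsilon$, since the $\epsilon$-Nash definition used in the background naturally yields a worst-case value of $\gv-2\epsilon$; this either requires a slightly tighter notion of $\epsilon$-equilibrium or a minor adjustment to the constant in the bound.
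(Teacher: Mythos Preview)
Your proposal is correct and reaches the same bound, but via a different route from the paper.

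The per-match comparison you set up --- bounding the utility gap between the algorithm's actual play and the witness $\epsilon$-equilibrium by $\Delta$ times the escape mass $p^i_{\mathrm{new}}$ into previously unseen info states --- is essentially the content of the paper's preparatory lemma, though you arrive at it by a coupling argument rather than the paper's direct computation on the partition $(\mc Z_\bullet,\mc Z_\circ)$. Where you genuinely diverge is in the aggregation step. The paper introduces $a_{v,\ell}$, the worst-case cumulative deficit with $v$ unfilled info states and $\ell$ matches remaining, and proves $a_{v,\ell}\ge -v\Delta$ by a double induction that tracks how many new states each match consumes. Your argument sidesteps the induction: you sum the per-match deficits $\Delta\cdot p^i_{\mathrm{new}}$ directly and bound $\sum_i \EX[p^i_{\mathrm{new}}]$ by the almost-sure count of matches that introduce at least one new info state, which is at most $|\mc S_1|$. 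This is a more elementary path to the same inequality; the paper's inductive formulation, on the other hand, makes the dependence on the number of \emph{currently} unfilled states explicit at every stage.

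On the two concerns you flag. The stateful-opponent worry is harmless: your inequality $u_1(\sigma^i_{\search},\sigma^i_2)\ge \gv-\epsilon-\Delta\cdot p^i_{\mathrm{new}}$ holds conditionally on each realized prefix, for whatever $\sigma^i_2$ the adversary plays given that prefix, so taking the outer expectation and summing is valid. The $\epsilon$-versus-$2\epsilon$ point is real, but it is not specific to your argument --- the paper's own lemma asserts the same $\gv-\epsilon$ guarantee for the player-$1$ marginal of an $\epsilon$-equilibrium against an arbitrary opponent, so both proofs share this slack in the constant.

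One small wording fix: you write that $\sigma^i_{\search}$ and $\sigma^{(i-1)}_1$ ``coincide on $V_{i-1}$.'' Global consistency only pins down the algorithm's output at an info state when that state is actually visited, so what you can literally assert (and what suffices for the coupling) is that they coincide along every trajectory that stays inside $V_{i-1}$; this follows by applying Definition~\ref{def:global_consistency} to the extended sequence $(z_1,\dots,z_{i-1},z)$ for each such terminal $z$ and chaining through the earlier match where each $s\in V_{i-1}$ was first visited.
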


\begin{corollary}
\label{crl:global_consistency_soundness}
An algorithm $\Omega$ that is globally consistent with $\epsilon$-equilibria is $\keps$-sound as $k \rightarrow \infty$.
\end{corollary}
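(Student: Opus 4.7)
The plan is to derive the corollary via a direct limit argument applied to the bound from the preceding theorem. The theorem gives the uniform lower bound
\[
\EX_{\matches \sim P^{k'}_{\search, \search_2}}[\mc R(\matches)] \;\geq\; \gv - \epsilon - \frac{|\mc S_1|\,\Delta}{k'}
\]
against every opponent $\search_2$ and every $k'$. The only $k$-dependent term, $|\mc S_1|\,\Delta/k'$, has a numerator fixed entirely by game-dependent constants (information-state count and utility span) and vanishes as $k' \to \infty$. So for any tolerance $\delta > 0$, I would pick $k$ large enough that $|\mc S_1|\,\Delta/k < \delta$, forcing $\EX_{\matches \sim P^{k'}_{\search, \search_2}}[\mc R(\matches)] \geq \gv - \epsilon - \delta$ for every $k' \geq k$ and every $\search_2$.

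I would then line this up against a fixed $\epsilon$-Nash equilibrium strategy $\sigma$. A standard two-player zero-sum exploitability bound, following directly from the $\epsilon$-NE definition, gives $\min_{\sigma'_2} u_1(\sigma_1, \sigma'_2) \geq \gv - 2\epsilon$ for every $\sigma_1 \in \mc{NE}^{\epsilon}_1$. Because $\sigma$ plays identically and independently in each of the $k'$ matches, averaging preserves this per-match guarantee, so $\min_{\search_2} \EX_{\matches \sim P^{k'}_{\sigma, \search_2}}[\mc R(\matches)] \geq \gv - 2\epsilon$. Combining the two bounds, whenever $k$ is large enough that $|\mc S_1|\,\Delta/k < \epsilon$, the algorithm's worst-case guarantee $\gv - \epsilon - O(1/k)$ dominates the $\gv - 2\epsilon$ guarantee of any $\epsilon$-NE, which is exactly the asymptotic form of $(k,\epsilon)$-soundness claimed.

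I expect the main obstacle to be interpretive rather than technical. Literally, Definition~\ref{def:epsilon_sound} asks $\search$ to match $\sigma$'s reward against \emph{each} opponent $\search_2$ individually, which is strictly stronger than the worst-case comparison above; against a ``friendly'' opponent an $\epsilon$-NE can collect reward far in excess of $\gv - \epsilon$, and the theorem does not give $\search$ a matching upper-envelope. The qualifier ``as $k \to \infty$'' in the corollary is what allows the weaker worst-case reading used above, and under that reading the corollary reduces to the vanishing of the $O(1/k)$ correction in the theorem's bound.
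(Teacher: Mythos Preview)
Your argument is essentially what the paper intends: the corollary is stated without proof as an immediate consequence of Theorem~\ref{thm:global_consistency_iig_limit}, and your limit computation on the $|\mc S_1|\Delta/k'$ term is exactly that consequence. The paper offers nothing beyond this, so there is no alternative route to compare against.

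Your third paragraph is not an obstacle to matching the paper---it is sharper than the paper. The concern you raise is real: under a literal reading of Definition~\ref{def:epsilon_sound}, the inequality must hold opponent-by-opponent, and Theorem~\ref{thm:global_consistency_iig_limit} only delivers a uniform worst-case floor of $\gv-\epsilon-O(1/k)$, which need not dominate what a fixed $\epsilon$-equilibrium collects against a cooperative $\search_2$. The paper itself resolves this only implicitly, via Theorem~2 and its proof, which effectively treat $\keps$-soundness as the worst-case (best-response) condition $\brv(\fog^{k'}_\search)\le k'\epsilon$. Under that operational reading, your argument goes through cleanly; your caveat simply makes explicit an ambiguity the paper leaves unaddressed.
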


\subsection{Strong Global Consistency}

The problem with global consistency is that it guarantees the existence of
consistent equilibrium for any game-play \emph{after} the game-play is
generated. Strong global consistency additionally guarantees that the game-play
\emph{itself} is generated consistently with an equilibrium; and as in global
consistency, the partial strategies for this game-play also correspond to an
$\epsilon$-equilibrium. In other words, the online algorithm simply exactly
follows a predefined equilibrium.

\begin{definition}
\label{def:strong_global_consistency}
Online algorithm $\search$ is strongly globally consistent with $\epsilon$-equilibrium if
\begin{align*}
   &{\color{consC} \exists \sigma \in \mc{NE}^{\epsilon}_n }
   \ \
   {\color{consA} \forall k\ \forall \matches=(z_1,\, z_2,\, \hdots,\, z_k) }
   \ \
   {\color{consB} \forall h \in \worldTree(z_k) }\\
   &\text{holds that}\ \ \search^{(z_1,\,\hdots,\, z_{k-1})}(s(h)) = \sigma(s(h)).
\end{align*}
\end{definition}

Strong global consistency guarantees that the algorithm can be tabularized, and
the exploitability of the tabularized strategy matches $\epsilon$-soundness of
the online algorithm.

\begin{restatable}{theorem}{globalconsistencysoundness}
\label{thm:global_consistency_soundness}
Online algorithm $\search$ that is strongly globally consistent with $\epsilon$-equilibrium is $\epsilon$-sound.
\end{restatable}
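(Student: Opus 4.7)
The plan is to show that strong global consistency pins down the distribution over matches completely, so that $\search$ is reward-indistinguishable from the fixed $\epsilon$-equilibrium $\sigma$ against \emph{every} opponent algorithm. Then the inequality in Def.~\ref{def:epsilon_sound} collapses to an equality.

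First, I would fix the equilibrium $\sigma \in \mc{NE}^\epsilon_n$ whose existence is asserted by Def.~\ref{def:strong_global_consistency}. The key observation is that the quantifier order ``$\exists \sigma\, \forall k\, \forall \matches\, \forall h$'' means that the output of $\search$ at any information state is $\sigma(s(h))$ regardless of the state $\searchparam$ or the history of preceding matches $(z_1,\ldots,z_{k-1})$. So even though $\search$ is formally stateful, its action distribution at every reachable information state is exactly the action distribution prescribed by the stateless fixed strategy $\sigma$.

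Next, fix an arbitrary opponent algorithm $\search_2$ and a horizon $k' \geq 1$. I would prove by induction on $i = 1, \ldots, k'$ that the marginal distribution of the $i$-th match pair $(z_i, \text{opponent state after match } i)$ under $P^{k'}_{\search,\search_2}$ coincides with that under $P^{k'}_{\sigma,\search_2}$. The base case $i=1$ is immediate because at the very first match $\search$'s strategy at each visited $s(h)$ equals $\sigma(s(h))$ by strong global consistency (applied with $k=1$). For the inductive step, conditional on the first $i-1$ matches $(z_1,\ldots,z_{i-1})$ (which have the same joint law in both processes by the inductive hypothesis), the $i$-th match is generated by interleaving chance transitions, the opponent's move drawn from $\search_2^{(z_1,\ldots,z_{i-1})}$, and the player's move drawn from $\search^{(z_1,\ldots,z_{i-1})}(s(h)) = \sigma(s(h))$; this is identical to the generative process under $P^{k'}_{\sigma,\search_2}$. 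Hence the full laws $P^{k'}_{\search,\search_2}$ and $P^{k'}_{\sigma,\search_2}$ agree.

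It follows that $\EX_{\matches \sim P^{k'}_{\search,\search_2}}[\mc R(\matches)] = \EX_{\matches \sim P^{k'}_{\sigma,\search_2}}[\mc R(\matches)]$ for every $k' \geq 1$ and every $\search_2$, so the inequality in Eq.~(\ref{eq:epsilon_sound}) holds (with equality), and $\search$ is $(1,\epsilon)$-sound, i.e.\ $\epsilon$-sound. The only subtle point — and the main obstacle — is to be explicit that ``stateful'' on both sides causes no trouble: $\search$'s state never affects its outputs (by the strong global consistency clause), and $\search_2$'s state evolves as a deterministic function of the observed match history, which has the same distribution in the two processes by the induction. Alternatively, one could shortcut the argument by invoking Theorem~\ref{thm:global_consistency_soundness}'s predecessor on response games: the coincidence of distributions shows $\fog^{k'}_\search$ and $\fog^{k'}_\sigma$ are the same single-player game, so $\brv(\fog^{k'}_\search) = \brv(\fog^{k'}_\sigma) \leq k'\epsilon$, and the response-game theorem yields $\epsilon$-soundness directly.
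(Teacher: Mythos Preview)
Your argument is correct and is essentially what the paper has in mind. In fact the paper does not spell out a proof of this theorem at all; it regards it as immediate from the sentence preceding the statement (``Strong global consistency guarantees that the algorithm can be tabularized, and the exploitability of the tabularized strategy matches $\epsilon$-soundness of the online algorithm'') together with the earlier remark that an algorithm following a fixed $\epsilon$-equilibrium is $\epsilon$-sound by definition. Your induction over matches, showing $P^{k'}_{\search,\search_2} = P^{k'}_{\sigma,\search_2}$ for every opponent and every $k'$, is precisely the formalization of that sentence, and the response-game shortcut you mention at the end is the same observation phrased via $\brv(\fog^{k'}_\search)=\brv(\fog^{k'}_\sigma)\le k'\epsilon$.
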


Canonical examples of strongly globally consistent online algorithms are
DeepStack/Libratus. In general, an algorithm that uses a notion of safe
(continual) resolving is strongly globally consistent as it essentially
re-solves some $\epsilon$-equilibrium (albeit an unknown one) that it follows.
Another, more recent example is~ReBeL~\cite{rebel}, as it essentially imitates
CFR-D iterations in conjunction with a neural network.

\subsubsection{Proving Strong Global Consistency}

While we are not aware of an algorithm that is only globally consistent (besides
the toy \texttt{PlayCache}), reasoning about global consistency can be
beneficial for showing the strong global consistency. Doing so just based on its
definition might not be straightforward. However, proving global consistency can
be easier. If applicable, we can then use the following theorem to extend the
proof to the strong global consistency.

\begin{restatable}{theorem}{globalconsistencystateless}
\label{thm:global_consistency_stateless}
If a globally consistent algorithm is stateless then it is also strongly globally consistent.
\end{restatable}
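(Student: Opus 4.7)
The plan is to exploit the stateless hypothesis to define a single fixed strategy $\sigma_\search(s) := \search^{\varnothing}(s)$ induced by $\search$, and then argue that $\sigma_\search$ itself can serve as the universal witness $\sigma$ required by Def.~\ref{def:strong_global_consistency}. By statelessness, $\search^{(z_1,\dots,z_{k-1})}(s) = \sigma_\search(s)$ for every information state $s$ and every prior match sequence, so the agreement condition in the definition of strong global consistency is satisfied automatically once $\sigma_\search$ is taken as the witness. It therefore suffices to show $\sigma_\search \in \mc{NE}^\epsilon_n$.

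To show $\sigma_\search \in \mc{NE}^\epsilon_n$, I would construct a ``covering'' match sequence and invoke global consistency. For each reachable information state $s \in \priTree$ of player $n$, pick a terminal history $z_s \in \mc Z$ whose world trajectory passes through $s$ (such a $z_s$ exists by reachability, and there are finitely many such $s$ in a finite \ffoogg), and form $\matches^* = (z_{s_1}, z_{s_2}, \dots, z_{s_m})$ enumerating all of them. Global consistency applied to $\matches^*$ produces an $\epsilon$-equilibrium strategy $\sigma \in \mc{NE}^\epsilon_n$ satisfying $\search^{(z_{s_1},\dots,z_{s_{i-1}})}(s(h)) = \sigma(s(h))$ for every $i$ and every $h \in \worldTree(z_{s_i})$. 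By statelessness the left-hand side collapses to $\sigma_\search(s(h))$, so $\sigma_\search$ and $\sigma$ agree on every reachable information state of player $n$.

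The final step is to transfer the equilibrium property from $\sigma$ to $\sigma_\search$. For any opponent strategy $\sigma'_{\opp}$, the expected utility $u_n(\sigma_\search, \sigma'_{\opp})$ depends only on the values of $\sigma_\search$ at player-$n$ information states visited with positive probability under $(\sigma_\search, \sigma'_{\opp})$; these states are reachable and therefore $\sigma_\search$ coincides with $\sigma$ on them, giving $u_n(\sigma_\search, \sigma'_{\opp}) = u_n(\sigma, \sigma'_{\opp})$ for every $\sigma'_{\opp}$. Consequently $\min_{\sigma'_{\opp}} u_n(\sigma_\search, \sigma'_{\opp}) = \min_{\sigma'_{\opp}} u_n(\sigma, \sigma'_{\opp})$, and so $\sigma_\search$ is an $\epsilon$-equilibrium strategy in its own right, establishing strong global consistency.

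The main obstacle I anticipate is making the transfer step precise: one must use the right notion of ``reachable'' (namely, achievable by \emph{some} strategy profile together with chance), otherwise a worst-case opponent could in principle probe $\sigma_\search$ at an information state that global consistency has never pinned down. Once reachability is defined in this way, the claim reduces to the standard observation that a behavioral strategy's values off its reachable tree never influence expected utilities, which is what lets $\sigma_\search$ inherit the worst-case bound from $\sigma$.
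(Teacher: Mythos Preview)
Your proof is correct and follows essentially the same idea as the paper: statelessness forces $\search$ to output a single fixed behavioral strategy, and global consistency then supplies an $\epsilon$-equilibrium witness that works uniformly over all match sequences. The paper simply asserts that statelessness allows the $\forall m\,\exists\sigma$ quantifiers in Def.~\ref{def:global_consistency} to be swapped to $\exists\sigma\,\forall m$, whereas you make this swap rigorous via an explicit covering sequence $m^*$; your additional ``transfer'' step (showing $\sigma_\search\in\mc{NE}^\epsilon_n$ rather than using the covering witness $\sigma$ directly) is a slight detour but harmless.
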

\begin{proof}

    The definition of a stateless algorithm implies that for an information
    state $s$ the algorithm always produces the same behavioral strategy
    $\sigma(s)$ as the algorithm is deterministic (all stochasticity is encoded
    within the algorithm state $\searchparam$, see Remark~\ref{rem:stochastic}).

    This means that whatever $\epsilon$-equilibria the algorithm is globally
    consistent with is independent of the current game-play or match number.
    This allows us to swap the quantifiers from
    \begin{align*}
        & {\color{consA} \forall k\ \forall \matches=(z_1,\, z_2,\, \hdots,\, z_k)}
        \
        {\color{consC} \exists \sigma \in \mc{NE}^{\epsilon}_n }
        \
        {\color{consB} \forall i \in \{1,\,\dots,\,k\} \ \forall h \in \worldTree(z_i) }
        \ : \\
        & \search^{(z_1,\,\hdots,\, z_{i-1})}(s(h)) = \sigma(s(h))
    \end{align*}
    to
    \begin{align*}
        & {\color{consC} \exists \sigma \in \mc{NE}^{\epsilon}_n }
        \
        {\color{consA} \forall k\ \forall \matches=(z_1,\, z_2,\, \hdots,\, z_k)}
        \
        {\color{consB}\forall i \in \{1,\,\dots,\,k\} \ \forall h \in \worldTree(z_i)}
        \ : \\
        & \search^{(z_1,\,\hdots,\, z_{i-1})}(s(h)) = \sigma(s(h)).
    \end{align*}

    Using the same argument we can treat the different matches $z_i$ as an iteration over $k$, leading us to strong global consistency
    \begin{align*}
        & {\color{consC} \exists \sigma \in \mc{NE}^{\epsilon}_n}
        \
        {\color{consA} \forall k\ \forall \matches=(z_1,\, z_2,\, \hdots,\, z_k)}
        \
        {\color{consB} \forall h \in \worldTree(z_k)}
        \ : \\
        & \search^{(z_1,\,\hdots,\, z_{k-1})}(s(h)) = \sigma(s(h)).
    \end{align*}

\end{proof}

\section[Relating soundness and regret]{Relating $\keps$-Soundness and Regret}
Regret is an online learning concept that has triggered design of a family of
powerful learning algorithms. Indeed, many algorithms that approximate Nash
equilibria use regret minimization~\cite{CFR}. There is a well-known connection
between regret and the Nash equilibrium solution concept. In a zero-sum game at
time $k$, if both players' overall regret $R_k$ is less than $k \epsilon$, the
average strategy profile is a $2\epsilon$-equilibrium~\cite{CFR}. The use of $k$
in $\keps$-soundness allows us to relate it with regret, and show how it is
different from the consistency hierarchy.

\begin{corollary} Any regret minimizer with a regret bound of $R_k$ is~$(k,
\frac{R_k}{2k})$-sound. \end{corollary}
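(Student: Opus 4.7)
The plan is to reduce the claim to the response-game characterization of soundness proved earlier in the paper, using the folk theorem recalled in the paragraph just above the corollary as the bridge from external regret to $\epsilon$-equilibrium quality.

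First, I would unpack what ``regret bound $R_k$'' gives operationally. For any opponent $\search_2$ and any horizon $k' \ge k$, let $\sigma_1^{(t)}$ and $\sigma_2^{(t)}$ denote the (possibly stochastic) strategies played by $\search$ and $\search_2$ in match $t$. By the definition of external regret, for every fixed $\sigma_1' \in \Sigma_1$,
\[
\sum_{t=1}^{k'} u_1(\sigma_1',\sigma_2^{(t)}) \;-\; \sum_{t=1}^{k'} u_1(\sigma_1^{(t)},\sigma_2^{(t)}) \;\le\; R_{k'}.
\]
Instantiating the comparator at a Nash strategy $\sigma_1^\ast$, whose per-match utility satisfies $u_1(\sigma_1^\ast,\sigma_2^{(t)}) \ge \gv$ regardless of $\sigma_2^{(t)}$, I obtain
\[
\EX_{\matches \sim P^{k'}_{\search,\search_2}}[\mc R(\matches)] \;\ge\; \gv - R_{k'}/k'.
\]

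Second, I would invoke the folk theorem recalled in the preceding paragraph (``both players' regret $\le k\epsilon$ implies $2\epsilon$-equilibrium'') to repackage this inequality as a bound on the response-game best-response value. The per-player regret budget $R_{k'}/k'$ corresponds, under the symmetric split between the two sides in the folk-theorem statement, to a per-side NE deficit of only $R_{k'}/(2k')$, so that
\[
\brv(\fog^{k'}_{\search}) \;\le\; k' \cdot \tfrac{R_{k'}}{2k'}.
\]
Using the standard monotonicity of average regret $R_{k'}/k'$ (non-increasing for the usual $O(\sqrt{k})$- or $O(\log k)$-rate minimizers) the right-hand side is bounded by $k' \cdot R_k/(2k)$ for every $k' \ge k$. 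The response-game theorem then yields $\keps$-soundness of $\search$ with $\epsilon = R_k/(2k)$.

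The main obstacle is the one-sided application of the folk theorem: its standard statement assumes both players are regret-minimizing, while soundness quantifies over an arbitrary adversary $\search_2$. The resolution is that only the one-sided half of the folk theorem, namely the reward inequality $\EX[\mc R(\matches)] \ge \gv - R_{k'}/k'$ derived from $\search$'s regret bound alone, is actually needed, and the factor of $1/2$ in the target is precisely the symmetric per-player split encoded in the ``$k\epsilon \Rightarrow 2\epsilon$'' statement. The remaining steps, writing the expected reward as a response-game best-response value and invoking the response-game theorem, are routine.
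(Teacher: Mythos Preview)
The paper offers no proof of this corollary; it is stated as an immediate consequence of the folk-theorem sentence that precedes it, so there is no detailed argument in the paper to compare against.

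On the merits of your proposal: the first step is fine and is essentially all the paper is relying on --- the one-sided external-regret inequality does yield $\EX[\mc R(\matches)] \ge \gv - R_{k'}/k'$, equivalently $\brv(\fog^{k'}_{\search}) \le R_{k'}$. The gap is your second step. In the folk theorem as quoted, ``regret $\le k\epsilon$ on both sides $\Rightarrow$ $2\epsilon$-equilibrium'': the factor $2$ \emph{degrades} the equilibrium quality relative to the per-match regret, it does not halve it. Your ``symmetric per-player split'' explanation therefore runs in the wrong direction and does not justify replacing $R_{k'}/k'$ by $R_{k'}/(2k')$ in the response-game bound. You cannot invoke the two-sided theorem here at all, since soundness quantifies over an arbitrary adversary $\search_2$ whose regret is uncontrolled --- a point you correctly flag as the ``main obstacle'' but do not actually resolve. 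If the $1/2$ is to appear, it has to come from the slack between ``$\sigma$ is part of an $\epsilon$-Nash profile'' and ``$\sigma$ is $\epsilon$-exploitable'' (an $\epsilon$-Nash strategy can be up to $2\epsilon$-exploitable), which is a different mechanism from the one you describe and is, incidentally, in tension with the paper's own response-game-theorem proof that asserts the adversary gains ``at most $\epsilon$'' per match.

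A secondary issue: the monotonicity of $k' \mapsto R_{k'}/k'$ that you invoke is an additional hypothesis not present in the corollary, yet it is needed for your reduction to a single constant valid for all $k' \ge k$.
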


\section{Experiments}\label{sec:experiments}

A particularly interesting example of an algorithm that is only locally consistent is Online Outcome Sampling (OOS)~\cite{OOS}. We use it to demonstrate the theoretical ideas in this paper with empirical experiments.  We show that local consistency does in fact fail to result in $\epsilon$-soundness in the online setting.  The problem we demonstrate is also not specific to OOS, but in general to any adaptation of an offline algorithm to the online setting where the algorithm attempts to improve its strategy during online play.

At high level, OOS runs the offline MCCFR algorithm in the full game (while also gradually building the tree), parameterized to increase the sampling probability of the current information state.
The algorithm then plays based on the resulting strategy for that particular state.
The problem is that these individual MCCFR runs can converge to different $\epsilon$-equilibria as the MCCFR is parameterized differently in each information state.
In other words, the OOS algorithm exactly suffers from the fact that it is only locally consistent.

We use two games in our experiments: Coordinated Matching Pennies from Section~\ref{sec:online-algorithm} and Kuhn Poker~\cite{kuhn1950simplified}.  We present the Coordinated Matching Pennies results here.
See Appendix~\ref{ap:experiment} for the complete experimental details and a similar experiment for Kuhn Poker.

Within a single match of Coordinated Matching Pennies, the second player will act either in $s_1$ or $s_2$. OOS will therefore bias MCCFR samples to whichever information state that actually occurs in the match. These two situations are distinct and result in two different strategies for the whole game (including the non-visited state), similarly to the example in~Section~\ref{sec:local-consistency}. To emulate what OOS does, we parametrize MCCFR runs to bias  samples into $s_1$ and $s_2$ respectively, and initialize the regrets in $s_1, s_2$ so that the MCCFR is likely to produce diverse sets of strategies.
As MCCFR is stochastic, we average the strategies over $3\cdot10^4$ random seeds. 

In~\fig{oos-mp-small}we plot exploitability for the average strategies, and unbiased MCCFR for reference.  The two biased variants of MCCFR actually converge at a similar rate to unbiased MCCFR, confirming that OOS is locally consistent: it quickly converges to an $\epsilon$-equilibria for $s_1$ and $s_2$ individually.  However, the tabularized strategy --- the strategy OOS follows online --- is many orders of magnitude more exploitable even with hundreds of thousands of online iterations.  The problem is that adapting its strategy online at $s_1$ and $s_2$ causes it to not be globally consistent with any $\epsilon$-equilibria.

\begin{figure}[t]
  \centering
  \includegraphics[width=0.5\textwidth]{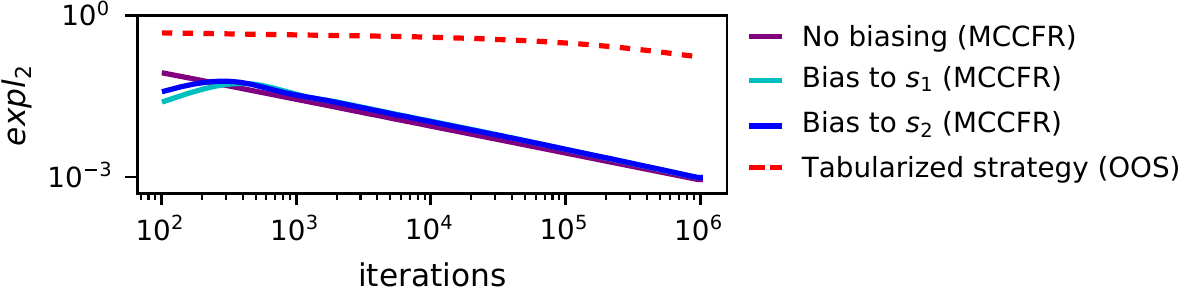}
  \caption{
  While individual MCCFR strategies have low exploitability of $\sim10^{-3}$, the~tabularized OOS strategy has high exploitability of $0.17$ even after $10^6$~iterations.
  }
  \label{fig:oos-mp-small}
\end{figure}

\section{Related literature}

There are several known pathologies that occur in imperfect information games
that are not present in the perfect information case. The pathologies that
happen in the offline setting also present a problem in the online setting.
In~\cite{frank1998search} the authors identified two problems: strategy-fusion
and non-locality.

These two problems can easily arise for algorithms designed to solve only
perfect-information games, such as minimax or reinforcement learning algorithms,
and lead to computation of exploitable strategies. The proposed local
consistency is similar in its spirit to non-locality, as composition of partial
strategies (that correspond to parts of distinct equilibria) produced by an
online algorithm may not be an overall equilibrium strategy. However local
consistency identifies sub-optimal play also across repeated games.

In~\cite{DeepStack,Libratus}, they use some form of continual re-solving, which
is strongly globally consistent. This guarantees soundness of the algorithms.
Continual resolving uses value functions defined over public belief
spaces~\cite{rebel} to compute consistent strategies. Indeed, the minimal amount
of information needed to properly define value functions are ranges (beliefs)
over common knowledge public states~\cite{seitz2019value}.

The notion of sufficient plan-time statistics studied in~\cite{Oliehoek13IJCAI}
is very closely related to the public beliefs. The paper suggests the structure
of the value function for games where the hidden information becomes public
after a certain number of moves.

We are not aware of algorithms in the literature that are only globally
consistent. This may lead to interesting future work: the algorithm may try to
reduce its sub-optimal play of the first matches, while possibly not using all
of the required player's ranges.

Tabularization has been used in~\cite{MCCR} to compute an offline strategy and
its exploitability. In~\cite{OOS} they consider computing this tabularization
(they refer to it as ``brute-force''  approach), but it is a very expensive
procedure. Instead they use an ``aggregate method'', which ``stitches'' strategy
from a small number of matches and defines the strategy as uniform in
non-visited information states. They do not state whether such approximation of
tabularization is indeed correct.

\section{Conclusion}

We introduced the game of Coordinated Matching Pennies (CMP). This game
illustrates the consistency issues that can arise for online algorithms in
imperfect information games. We observed that exploitability is not an
appropriate measure of an algorithm's performance in online settings. This
motivated us to introduce a formal framework for studying online algorithms and
allowed us to define $\epsilon$-soundness. Just like $\epsilon$-exploitability,
it measures the performance against the worst-case adversary. Soundness
generalizes exploitability to repeated sequential games and it collapses to it
when an online algorithm follows a fixed strategy. We then introduced a
hierarchical consistency framework that formalizes in what sense an online
algorithm can be consistent with a fixed strategy. Namely, we introduced three
levels of consistency: i)~local, ii)~global and iii)~strongly global. These
connect an online algorithm's behavior to that of a fixed strategy with
increasingly tight bounds on the average expected utility against a worst-case
adversary. We also stated various bounds on soundness based on the
exploitability of a consistent fixed strategy. Interestingly, the implications
are different in some cases for perfect and imperfect information games.

Within this framework, we saw that local consistency in imperfect information
games does not guarantee correct evaluation of worst-case performance by
computing exploitability. Based on this result, we argued that OOS, previously
considered sound, can be exploited. This illustrates that these subtle problems
with online algorithms can easily be missed and lead to wrong conclusions about
their performance. Our experimental section included experiments in CMP and Kuhn
Poker and showed a large discrepancy between OOS's actual performance and the
bound previously thought to hold.

\clearpage
\paragraph{Acknowledgments}

Computational resources were supplied by the project "e-Infrastruktura CZ"
(e-INFRA LM2018140) provided within the program Projects of Large Research,
Development and Innovations Infrastructures. This work was supported by Czech
science foundation grant no. 18-27483Y. We'd like to thank the anonymous
reviewers for their valuable feedback on a prior version of this paper.

\bibliography{main}

\clearpage
\appendix
\section{Consistency Examples} \label{ap:examples}

\begin{example} An online algorithm may be sound ($\epsilon = 0$), but there
might not be any offline equilibrium that produces the same distribution of
matches. \end{example}

Suppose we have a game where each player acts once, chooses from actions
$\{A,B,C\}$ and receives zero utility (i.e. a normal-form game with 3x3 zero
payoff matrix). All strategies are equilibria. If we play $k=3$ matches and the
players play pure strategies $A$, $B$ and $C$ in each match, we get a
distribution of matches $\matches=(z_1,z_2,z_3)$ that cannot be achieved with
fixed offline equilibrium. In this case, the distribution is $((w^0, (A,A)),
(w^0, (B,B)), (w^0, (C,C)))$ with probability one, and all other terminal
histories with probability zero.

\begin{example} An algorithm that is locally consistent with equilibria can be
exploited in a perfect information game. \end{example}

\begingroup
\setlength{\intextsep}{2pt}%
\setlength{\columnsep}{8pt}%
\begin{wrapfigure}{r}{0.1\textwidth}
  \centering
  \includegraphics[width=\linewidth]{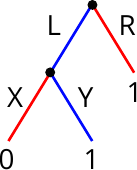}
\end{wrapfigure}

Suppose we have a single player game as in figure on the right. Both blue $L,Y$
and red $R,X$ pure strategies are equilibria. However, if the top node is
locally consistent with the blue strategy, and the bottom node with the red
strategy, the resulting strategy the algorithm actually plays is $L,X$, which is
sub-optimal.

\section{Tabularization} \label{ap:tabularization}

We can consider two ways how the online setting can be realized, with respect to how players' state changes between $k$ matches in the repeated game:
i) no-memory, where the players take turns in a match, and their memory is reset when each match is over
(players are allowed to retain memory within the individual matches), or
ii) with-memory, where the players are allowed to retain memory between the matches.

As exploitability of tabularized strategy is guaranteed to reflect $\epsilon$-soundness only for strongly globally consistent algorithms, we assume their use only. The with-memory case then collapses to the no-memory case: strongly globally consistent algorithm simply plays as some predefined (offline) equilibrium. The following text then simply defines how to compute the offline equilibrium by querying the algorithm in all states.

\newcommand{\partstrat}[2]{\sigma^{\search,\!\searchparam_{#1}}_\pl(#2)}
\begin{definition}[Partial strategy]
For a terminal history 
\[
    \match = ( w^0\!\!,\, a^0,\,w^1\!\!,\,a^1,\, \hdots,\, w^l\!\!,\,a^l,\, w^{l+1} )
\]
player $n$ has a corresponding sequence of information states\footnote{We omit the index $n$ for information state $s_n$ for clarity.}
\[
    s = ( s^0\!\!,\, s^1\!\!,\, \hdots,\, s^{l},\, s^{l+1}).
\]

We say a partial strategy $\partstrat{0}{\match}$ for player $\pl$ who uses search $\search$ and starts with state $\searchparam_0$, is an expected behavioral strategy defined only for the visited information states:
\[
    \partstrat{0}{\match} = \{ (s^t,\, \mu_t) 
        \mid 
            (\mu_t, \searchparam_{t+1}) = \EX_{\searchparam_t} \left[ \search(s^t, \searchparam_t) | s^t \right]
            \: \forall\, t \in \{0, \dots, l\}
        \}.
\]
\end{definition}

Note that when we compute the strategy $\mu_t$ from $\EX_{\searchparam_t} \left[ \search(s^t, \searchparam_t) | s^t \right]$, we must compute it as a weighted average to respect the structure of the private tree.
The weights are reach probabilities of the information state $s^t$: cumulative product of the player's strategy over the sequence of information states $s^0, \hdots s^{t-1}$, leading to target information state $s^t$. 
See~\cite[Eq.~4]{CFR} for more details.

\begin{definition}
A composition of partial strategies for terminals $\mc Z$ is a tabularized strategy
\[
    \partstrat{0}{\mc Z} = \bigcup_{\match_i \in \mc Z} \partstrat{0}{\match_i}.
\]
\end{definition}

\section{Proofs of Theorems} \label{ap:proofs}

\localsubgameperfect* \begin{proof} In perfect information games the notions of
an information state and a history blend together, as there is a one-to-one
correspondence between them. Expected utility of a history is the same for all
subgame perfect equilibria. It corresponds to the best achievable value against
worst-case adversary, given that the history occurred. This property implies
that the worst-case expected utility of a history remains optimal, if the player
plays only actions that are in the support of any subgame perfect equilibrium in
the consequent states. A formal proof can be constructed by induction on the
maximal distance from a terminal history.

Notice that this exactly happens if the player plays according to an algorithm
locally consistent with subgame perfect equilibria. The expected worst-case
utility for any history will be optimal, and the worst-case expected utility of
a match will correspond to the worst-case expected utility of the history at the
beginning of the game. Therefore the worst-case expected utility of each match
is also optimal and the algorithm is sound. \end{proof}

\newcommand{\filled}{{\bullet}}  %
\newcommand{\emptyy}{{\circ}}    %
\newcommand{\notnash}{{\times}}  %

We will now prepare the ground to prove
Thm.~\ref{thm:global_consistency_iig_limit}.

When an algorithm that is globally consistent with an $\epsilon$-equilibrium is
queried in some information states in a match in the repeated game, it will
always keep playing the same behavioral strategy in these situations in
subsequent matches. We call this as ``filling in'' strategy. Once the algorithm
fills the strategy in all player's information states, we are guaranteed to get
match reward of $\epsv = u^* - \epsilon$ on average against a worst-case
adversary.

Informally speaking, the bound in Thm.~\ref{thm:global_consistency_iig_limit}
can be easily seen to be true for a game like Coordinated Matching Pennies, or
some generalization which will have a larger number of information states that
need to be coordinated (think of ``Coordinated Rock-Paper-Scissors''). At every
match, we can incur a loss of at most $\Delta$ when reaching an unfilled
history. This is a rather pessimistic lower bound on the value, but it lets us
ignore the algorithm state: we are either playing at filled information states,
or achieving the worst possible value. The problematic part is making sure the
bound holds also when we (repeatedly) visit previously filled information
states. For each of the possible future subgames, there are two cases. In both
cases, the number of $\Delta$-sized losses in utility plus the number of
unfilled information states does not increase, so we can use induction on the
length of the game to prove the claim. Along branches where the opponent had an
opportunity to exploit the algorithm by playing into an unfilled information
state, the algorithm loses at most $\Delta$ utility compared to the equilibrium,
but must fill in at least one information state to do so. Along branches where
the agent played through filled information states, the algorithm is playing
identically to the equilibrium strategy and thus achieves the same value.

To prove Thm.~\ref{thm:global_consistency_iig_limit} we will need to establish a
Lemma~\ref{lem:bound-eq-diff}, a bound of difference of utilities a player can
gain if he plays according to a partially filled $\epsilon$-equilibrium strategy
compared to $\epsv$ within an arbitrary match. The idea of the proof for
Thm.~\ref{thm:global_consistency_iig_limit} is then to bound this difference for
any number of non-visited information states and any number of remaining matches
within the response game using induction.

An online algorithm can fill in the strategy only into information states found
on the trajectory to a terminal history, as it will be queried only in these
situations. So after playing through a match $\match = ( w^0\!\!,\,
a^0,\,w^1\!\!,\,a^1,\, \hdots,\, w^l\!\!,\,a^l,\, w^{l+1} )$, the algorithm's
response at $s_\pl(w^i)$ will be fixed as $\sigma(s_\pl(w^i))$ for all visited
worlds $w^i$ on the trajectory $\match$.

To talk about possible filled strategies within a single match, we will
partition $\mc Z$ into two non-empty sets of terminal histories $\mc Z_\filled$
(pronounced ``filled'') and $\mc Z_\emptyy$ (pronounced ``empty'' or
``unfilled''). The partition has a special property of ``being possible to
realize in online setting'': all information states on the trajectory to
terminals $\mc Z_\filled$ are filled, and all terminals that can be reached just
through these filled information states are also in $\mc Z_\filled$ (we are not
taking into consideration the opponent's information states, i.e. we operate
only on the online player's private tree).

\begin{lemma}
\label{lem:bound-eq-diff}

For a probability of reaching a filled terminal $P(\filled) = \sum_{z_\filled
\in \mc Z_\filled} \pi^{\sigma}(z_\filled)$, an expected received utility for
filled terminals $u(\filled) = \frac{\sum_{z_\filled \in \mc Z_\filled}
\pi^{\sigma}(z_\filled) u_1(z_\filled)}{\sum_{z_\filled \in \mc Z_\filled}
\pi^{\sigma}(z_\filled)}$ and an  utility of playing outside of filled histories
$u(\notnash)$, it holds that

\begin{align}
\label{eq:lemma-ineq-bound}
P(\filled)(u(\filled) - \epsv) \geq -(1-P(\filled)) (u(\notnash) - \epsv + \Delta),
\end{align}
assuming $0 < P(\filled) < 1$.
\end{lemma}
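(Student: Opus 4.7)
The plan is to compare the expected utility the player receives when following the $\epsilon$-equilibrium $\sigma$ everywhere against the actual expected utility produced by the partially filled strategy, and to exploit the fact that the two profiles coincide on the filled region while their conditional utilities on the unfilled region differ by at most $\Delta$.

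First I would write the straightforward decomposition
$u_n(\sigma) = P(\filled)\, u(\filled) + (1-P(\filled))\, u^{\sigma}(\notnash)$,
where $u^{\sigma}(\notnash)$ denotes the conditional expected utility over the unfilled terminals \emph{under $\sigma$}. The key observation is that $P(\filled)$ and $u(\filled)$ are invariant to what the algorithm plays outside the filled region: by the very definition of the partition, every trajectory reaching $\mc Z_\filled$ passes only through information states where the algorithm's behavioral strategy coincides with $\sigma(s)$, and the opponent and chance play the same in either case. Because $\sigma$ is an $\epsilon$-equilibrium we have $u_n(\sigma) \geq \epsv$, so subtracting $\epsv$ from both sides of the decomposition and rearranging yields
$P(\filled)(u(\filled) - \epsv) \geq -(1-P(\filled))(u^{\sigma}(\notnash) - \epsv)$.

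To convert $u^{\sigma}(\notnash)$ into the lemma's $u(\notnash)$ I would use the trivial range bound: both quantities are averages of per-terminal utilities, hence both lie in $[\umin, \umax]$, so $u^{\sigma}(\notnash) \leq u(\notnash) + \Delta$. Since $1-P(\filled) > 0$ by assumption, multiplying by $-(1-P(\filled))$ flips the inequality and chains with the previous one to give exactly the claimed bound.

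The main obstacle I anticipate is purely bookkeeping: being precise about which strategy profile each of $P(\filled)$, $u(\filled)$, $u^{\sigma}(\notnash)$, and $u(\notnash)$ is evaluated under, and carefully justifying that the filled-region statistics do not depend on the algorithm's play in unfilled information states. Once this invariance is nailed down, the rest is the decomposition, one algebraic rearrangement, and the one-line $\Delta$ range bound on $u^{\sigma}(\notnash) - u(\notnash)$.
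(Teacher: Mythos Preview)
Your proposal is correct and follows essentially the same approach as the paper: decompose the $\epsilon$-equilibrium's expected utility into the filled and unfilled parts, use $u_n(\sigma)\geq\epsv$ to rearrange into the desired form with the conditional unfilled utility under $\sigma$ (the paper calls this $u(\emptyy)$), and then replace it by $u(\notnash)$ via the range bound $u(\emptyy)\leq u(\notnash)+\Delta$. Your explicit emphasis on why $P(\filled)$ and $u(\filled)$ are invariant to the algorithm's play in unfilled states is in fact a bit cleaner than the paper's treatment, which introduces this via a side condition $P^{\sigma}(\filled)+P^{\sigma'}(\emptyy)=1$ on $\sigma'$ without much comment.
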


\begin{proof}

For any strategy profile $\sigma=(\sigma_1^\epsilon, \sigma_2)$ with an
$\epsilon$-equilibrium strategy $\sigma_1^\epsilon$ and arbitrary opponent
strategy $\sigma_2$ it holds that \begin{align} \label{eq:partition-eq}
\sum_{z_\filled \in \mc Z_\filled} \pi^{\sigma}(z_\filled) u_1(z_\filled) +
\sum_{z_\emptyy \in \mc Z_\emptyy} \pi^{\sigma}(z_\emptyy) u_1(z_\emptyy) \geq
\epsv. \end{align} The terms can be simplified and rewritten as factorization of
product of probabilities and (weighted) utilities as \[ P(\filled) u(\filled) =
\underbrace{\sum_{z_\filled \in \mc Z_\filled}
\pi^{\sigma}(z_\filled)}_{P(\filled)} \cdot \underbrace{\frac{\sum_{z_\filled
\in \mc Z_\filled} \pi^{\sigma}(z_\filled)
u_1(z_\filled)}{\sum\limits_{z_\filled \in \mc Z_\filled}
\pi^{\sigma}(z_\filled)}}_{u(\filled)}, \] and similarly for the ``$\emptyy$''
partition. It also holds that $P(\filled) + P(\emptyy) = 1$, as the probability
of reaching a terminal history within a match is equal to one.

We can restate \eqref{eq:partition-eq} as
\begin{align}
\label{eq:partition-eq-diff}
P(\filled) ( u(\filled) - \epsv) + (1-P(\filled) ( u(\emptyy) - \epsv) \geq 0.
\end{align}

Suppose that for the partition ``$\emptyy$'' we didn't use an equilibrium
strategy for player 1, but arbitrary strategy profile $\sigma'$ satisfying
$P^{\sigma}(\filled) + P^{\sigma'}(\emptyy) = 1$. We will denote its utility
\begin{align}
\label{eq:ux-def}
u(\notnash)=\frac{\sum_{z_\emptyy \in \mc Z_\emptyy} \pi^{\sigma'}(z_\emptyy) u_1(z_\emptyy)}{\sum_{z_\emptyy \in \mc Z_\emptyy} \pi^{\sigma'}(z_\emptyy)}.
\end{align}

The value of any two strategies cannot differ by more than the maximum
difference of utilities in the game:
\begin{align}
    \label{eq:util-eq-ne}
    u(\emptyy) \leq u(\notnash) + \Delta.
\end{align}

Putting \eqref{eq:util-eq-ne} back to \eqref{eq:partition-eq-diff}, we get the
lemma that lower bounds the difference of filled partition and $\epsv$ for an
arbitrary match:
\begin{align}
P(\filled)(u(\filled) - \epsv) \geq -(1-P(\filled)) (u(\notnash) - \epsv + \Delta).
\end{align}

\end{proof}

\globalconsistencyiiglimit*
\begin{proof}
Let us rewrite the theorem slightly:
\[
    \forall k \ \forall \search_2 \, : \, k \EX_{\matches \sim P^k_{\search, \search_2}}[\mathcal{R}(\matches)] - k\epsv \geq - \bigl| \mc S_1 \bigr| \Delta.
\]
Since $\mc R(\matches)$ is average reward, multiplying by $k$ we get cumulative utilities in the game-play $\matches=(z_1, z_2,\,\dots,\,z_k)$:
\begin{align}
  \label{eq:thm-gc-diff}
    \forall k \ \forall \search_2 \, : \, \EX_{\matches \sim P^k_{\search, \search_2}}\left[ \sum_{i=1}^k u_1(z_i) \right] - k\epsv \geq - \bigl| \mc S_1 \bigr|  \Delta.
\end{align}

So on the left side of the inequality we have a difference of cumulative
(expected) utilities and of cumulative $\epsv$. We use cumulative values because
we are now in the setting of a $k$-repeated game.

Let $v$ be the number of non-visited information states of player 1 (resp. the
number of unfilled information states) in a match, i.e. $0 \leq v \leq |\mc
S_1|$, and let $l$ be the number of next matches (including the current one),
i.e. $1 \leq l \leq k$. We will use $a_{v,l}$ to denote the difference between
expected cumulative rewards and cumulative $\epsv$ from the current match
(inclusively) until the end of the game, if we are playing against worst-case
adversary. The left side~of~\eqref{eq:thm-gc-diff} corresponds to a value equal
or greater than $a_{|\mc S_1|, k}$, so we need to prove that $a_{|\mc S_1|, k}
\geq -\bigl| \mc S_1 \bigr|  \Delta$. It is sufficient to consider only the
worst-case adversary, as the bound on $a_{v,l}$ will hold for any other opponent
as well.

We will prove the theorem by induction on $a_{v,l}$ using $v$ and $l$
simultaneously. Let us characterize the \emph{base case}. If we have visited all
information states ($v=0$), we filled $\epsilon$-equilibrium strategy
everywhere. So at each visit of such a match we receive a reward of $\epsv$, and
the difference between expected cumulative rewards and cumulative $\epsv$ is
zero:
\begin{align}
\label{eq:base-case}
    a_{0,l} = 0 \quad \forall l.
\end{align}

The \emph{induction hypothesis} is 
\begin{align}
\label{eq:ind-hypo}
    a_{x,y} \geq -x\Delta \quad \forall x \leq v \ \forall y < l.
\end{align}

There are two possibilities of what can happen in a match. We either ``hit'' the
filled information states, receive some (expected) reward $u(v,l)$ and possibly
continue into next match where we receive $a_{v,l-1}$ (if the current match is
not the last one, i.e. $l > 1$). Or we ``miss'' the filled information states,
meaning we visit arbitrary number of new information states previously not
visited. This will also change $v$ to be smaller for all subsequent matches.

We state this with an abuse of notation as
\begin{equation}
\begin{aligned}
\label{eq:avl}
    a_{v,l} &= P(v,l)(u(v,l)-\epsv + a_{v,l-1}) \\ 
    &+ P(v-1,l)(u(v-1,l)-\epsv + a_{v-1,l-1}) \\
    &+ P(v-2,l)(u(v-2,l)-\epsv + a_{v-2,l-1}) \\
    &+\ \dots \\
    &+ P(v-v,l)(u(v-v,l)-\epsv + a_{v-v,l-1}),
\end{aligned}
\end{equation}
where the terms $P(v-i,l)$ and $u(v-i,l)$ are defined similarly to how we
defined them for $P(\filled)$ and $u(\filled)$. They correspond to the
probability and utilities received when we visit $i$ new (previously unfilled)
information states with $l$ remaining matches (including current one). It holds
that $P(v,l) + P(v-1,l) + P(v-2,l) + \dots + P(v-v,l) = 1$ as the probability of
reaching a terminal history within a match is equal to one.

By using the induction hypothesis \eqref{eq:ind-hypo} on terms $a_{x,l-1}\
\forall x < v$ we get a lower bound $a_{x,l-1} \geq -x \Delta$ on all of $x$. By
comparing these bounds we can deduce that $a_{v-1,l-1}$ lower bounds all of
$a_{x,l-1}$ with
\begin{align}
    \label{eq:avl-1}
    a_{v-1,l-1} \geq -(v-1)\Delta.
\end{align}

Using this bound in~\eqref{eq:avl} we get
\begin{equation}
\begin{aligned}
a_{v,l} \geq &P(v,l)(u(v,l)-\epsv + a_{v,l-1}) \\ 
    &+ P(v-1,l)(u(v-1,l)-\epsv -(v-1)\Delta) \\
    &+ P(v-2,l)(u(v-2,l)-\epsv -(v-1)\Delta) \\
    &+\ \dots \\
    &+ P(v-v,l)(u(v-v,l)-\epsv -(v-1)\Delta).
\end{aligned}
\end{equation}

We can factor it out as 
\begin{equation}
\begin{aligned}
    a_{v,l} \geq &P(v,l)(u(v,l)-\epsv + a_{v,l-1}) \\ 
    &+ (1-P(v,l))(-(v-1)\Delta-\epsv) \\
    &+ P(v-1,l)u(v-1,l) + P(v-2,l)u(v-2,l) \\
    &+ \dots\ + P(v-v,l)u(v-v,l).
\end{aligned}
\end{equation}

We replace the utilities $u(v-1), u(v-2),\ \dots,\ u(v-v)$ by $u(\notnash)$ from~\eqref{eq:ux-def}:
\begin{align}
    a_{v,l} \geq & P(v,l)(u(v,l)-\epsv + a_{v,l-1}) \\
    & + (1-P(v,l))(u(\notnash) - \epsv -(v-1)\Delta).
\end{align}

By using the induction hypothesis \eqref{eq:ind-hypo} we get
\begin{align}
    \label{eq:avl-ineq}
    a_{v,l} \geq & P(v,l)(u(v,l)-\epsv - v \Delta) \\
    & + (1-P(v,l))(u(\notnash) - \epsv -(v-1)\Delta).
\end{align}

Expanding the terms
\begin{equation}
\begin{aligned}
    a_{v,l} \geq & P(v,l)(-v \Delta) \\ 
    &+ (1-P(v,l))( -(v-1)\Delta) \\
    &+ P(v,l)(u(v,l)-\epsv) \\
    &+ (1-P(v,l))(u(\notnash) - \epsv)
\end{aligned}
\end{equation}

and using Lemma~\ref{lem:bound-eq-diff} with $\filled = v,l$ we have
\begin{equation}
\begin{aligned}
    a_{v,l} \geq & P(v,l)(-v \Delta) + (1-P(v,l))( -(v-1)\Delta) \\
    &- (1-P(v,l))(u(\notnash) - \epsv + \Delta) \\
    &+ (1-P(v,l))(u(\notnash) - \epsv).
\end{aligned}
\end{equation}

Simplifying, we get a bound on $a_{v,l}$:
\begin{align}
    a_{v,l} \geq - v \Delta.
\end{align}

Note that this bound holds also if $P(v,l)=1$ or $P(v,l)=0$:
\begin{itemize}
    \item $P(v,l)=0$: Then \eqref{eq:avl-ineq} becomes
        \[
            a_{v,l} \geq u(\notnash) - \epsv - (v-1)\Delta.
        \]
        Using the same argument as in~\eqref{eq:util-eq-ne},
        \[
            a_{v,l} \geq -\Delta - (v-1)\Delta = -v \Delta.
        \]
    \item $P(v,l)=1$: We use \eqref{eq:partition-eq-diff}, which becomes $u(v,l) - \epsv \geq 0$. Then \eqref{eq:avl-ineq} becomes
        \[
        a_{v,l} \geq u(v,l) - \epsv - v\Delta \geq - v\Delta.
        \]
\end{itemize}

Since at the beginning of the game-play there are $|\mc S_1|$ unfilled information states, we arrive at the original theorem
\begin{align*}
a_{|\mc S_1|,k} &\geq -|\mc S_1| \Delta.
\end{align*}
\end{proof}

\section{Experiment details} \label{ap:experiment}

As OOS runs MCCFR samples biased to particular information states, individual
MCCFR runs can converge to different $\epsilon$-equilibria, as the MCCFR is
parametrized differently in each information state. Additionally OOS runs in an
online setting, where the algorithm is given a time budget for computing the
strategy, and it may make different numbers of samples in each targeted
information state.

We emulate this experimentally by slightly modifying initial regrets to produce
distinct convergence trajectories. We show it is possible to highly exploit the
online algorithm: in fact, it is possible to exploit the algorithm more than the
worst of any individual biased strategies it produces, not just the expected
strategies. This modification is sound: the initial regrets will ``vanish'' over
longer sampling and the strategies will converge to an equilibrium in the limit.
This is justified by the MCCFR regret bound~\cite[Theorem~5]{MCCFR}.

We use two games: Coordinated Matching Pennies (CMP) from
Section~\ref{sec:online-algorithm} and Kuhn Poker~\cite{kuhn1950simplified}. We
use the no-memory online setting. Nash equilibria in both games are parametrized
with a single parameter $\alpha \in \langle 0, 1 \rangle$ for one player, while
the opponent has only a single unique equilibrium\footnote{ In CMP, $p=\alpha$
(playing Heads in $s_1$) and $q=1-\alpha$ (playing Heads in $s_2$). In Kuhn
Poker, constructing equilibrium strategy based on $\alpha$ is more complicated
and we refer the reader to~\cite{kuhn1950simplified}
or~\cite{hoehn2005effective} for more details.}. In both games, equilibria
require the strategies to be appropriately balanced, an effect of non-locality
problem~\cite{frank1998search} present only in imperfect information games. When
we compose the final strategy from partial online strategies, this balance can
be lost, resulting in high exploitability of the composed strategy.

We modify the initial regrets with following procedure:

\begin{itemize}

\item Choose a distinct value of $\alpha$, one for each of the player's top-most
information states in the game. Compute an equilibrium strategy according to
$\alpha$.

\item Directly copy the behavioral strategy into regret accumulators, and
multiply them by a constant $\mu$.

\end{itemize}

\begin{figure*}[t]
  \centering
  \includegraphics[width=0.9\textwidth]{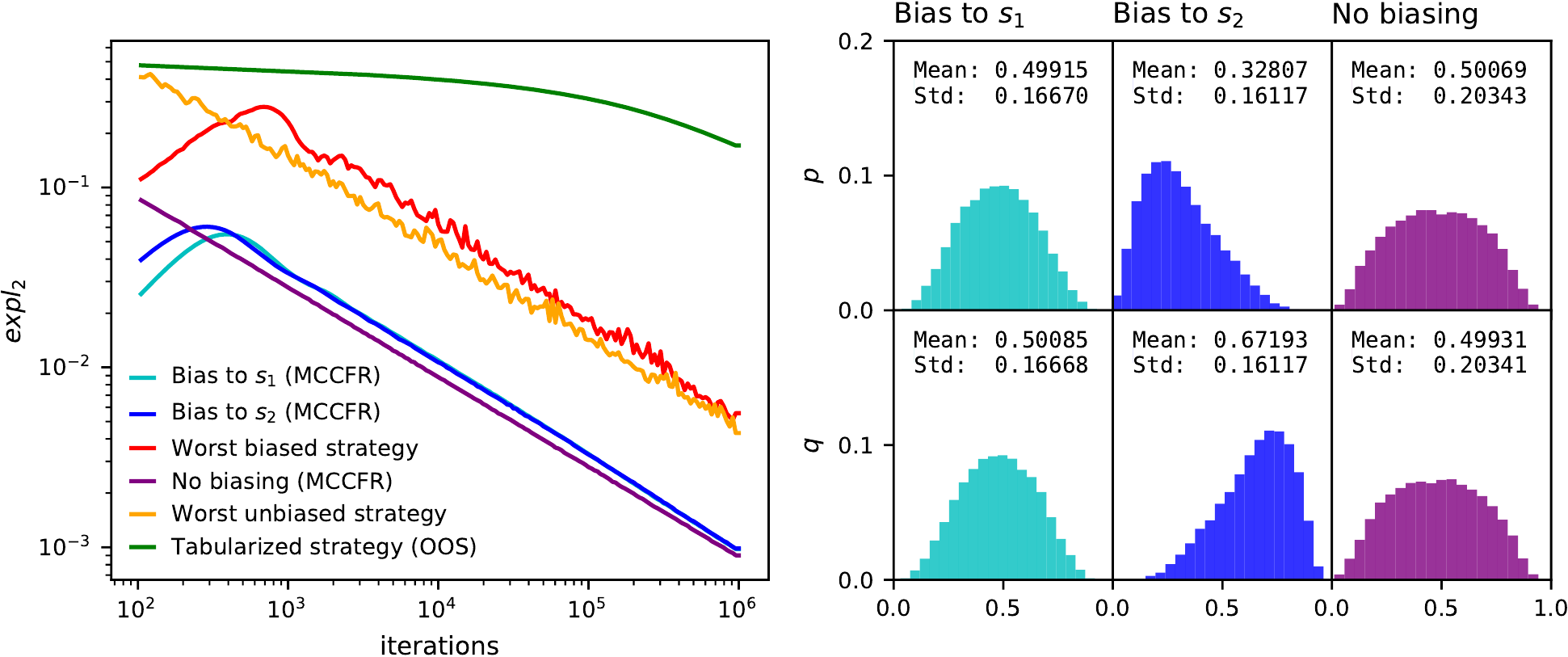}

  \caption{ Coordinated Matching Pennies. Left: While individual MCCFR
  strategies have low exploitability of $\sim10^{-3}$, the~tabularized OOS
  strategy has high exploitability of $0.17$ after $10^6$~iterations. Right:
  Normalized histograms of probabilities of playing Heads in $s_1$ - $p$ and
  $s_2$ - $q$ after $10^6$~iterations. The histograms within columns are
  correlated, as they approximately satisfy equilibrium condition $p+q=1$.
  Tabularized strategy, combination of $(p, s_1)$ and $(q, s_2)$ violates this
  constraint, resulting in high exploitability. }
  \label{fig:oos-mp}
\end{figure*}

This simple procedure effectively kick-starts the algorithm to produce distinct
trajectories based on $\alpha$.

\begin{figure}
  \centering
  \includegraphics[width=0.4\textwidth]{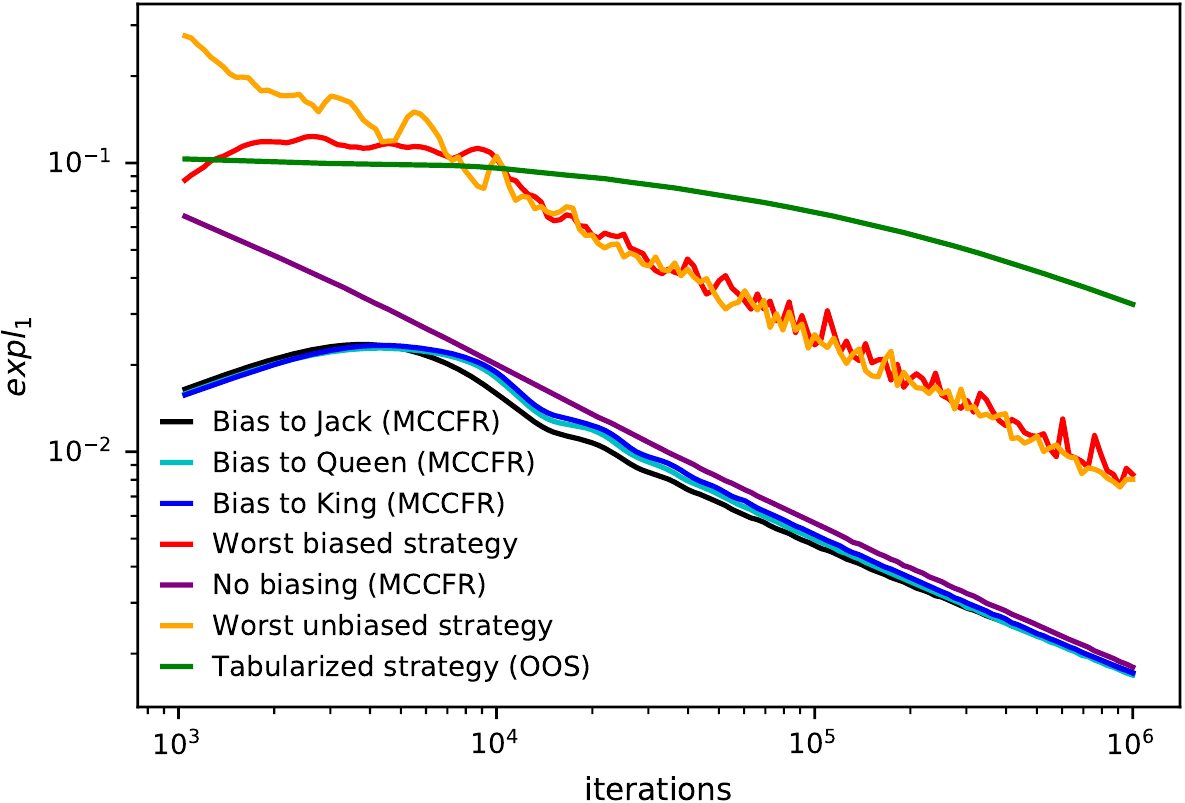}
  \caption{Kuhn Poker}
  \label{fig:oos-kuhn}
\end{figure}

In~\fig{oos-mp} and in~\fig{oos-kuhn}, we show that individual biased strategies
converge to Nash equilibria, but the tabularized strategy has higher
exploitability even than the worst individual strategy. In CMP, we bias the
second player to play in information state $s_1$ ($\alpha=0.5$) or $s_2$
($\alpha=1$) information states. In Poker, we bias the first player to play Jack
($\alpha=0$), Queen ($\alpha=1/2$) or King ($\alpha=1$) card. For both
experiments, exploration was set to 0.6, biasing to 0.1, and $\mu=500$, a small
regret that can be accumulated after less than 500 samples. Within our online
framework, the state $\searchparam$ consists of regrets and average strategy
accumulators for all information states, and from the state of the pseudo-random
number generator, which has distinct initial seeds for each match. The expected
strategies are estimated as an average over $3\cdot10^4$ seeds. We plot the
worst strategy from these individual biased strategies over all the seeds for
all iterations. We plot also MCCFR strategy for reference, to see the influence
of biasing and regret initialization.

\end{document}